\newtheorem{theorem}{Theorem}
\newtheorem{corollary}[theorem]{Corollary}
\newtheorem{lemma}[theorem]{Lemma}
\newtheorem{proposition}[theorem]{Proposition}
\newtheorem{remarkthm}[theorem]{Remark}
\newenvironment{remark}{\begin{remarkthm}\normalfont}{\end{remarkthm}}
\newtheorem{problem}{Problem}
\newtheorem{definition}[problem]{Definition}
\newtheorem{openquestionthm}[problem]{Open Question}
\newenvironment{openquestion}{\begin{openquestionthm}\normalfont}{\end{openquestionthm}}
\DeclareMathOperator{\lcm}{lcm}
\def\modd#1 #2{#1\ ({\rm mod}\ #2)}
\newcommand{\LineIf}[2]{\State \algorithmicif\ {#1}\ \algorithmicthen\ {#2} \algorithmicend\ \algorithmicif}
\newcommand{\LineIfElse}[3]{\State \algorithmicif\ {#1}\ \algorithmicthen\ {#2} \algorithmicelse\ {#3} \algorithmicend\ \algorithmicif}
\newcommand{\LineChooseThree}[3]{\State \textbf{choose:} {#1}\ \textbf{or:} {#2}\ \textbf{or:} {#3}\ \algorithmicend\ \textbf{choose}}
\renewcommand{\O}{\mathcal{O}}
\title{Existential length universality}
\author{
Pawe{\l} Gawrychowski \\
Institute of Computer Science \\
University of Wroc{\l}aw \\
Wroc{\l}aw, Poland \\
{\tt gawry@cs.uni.wroc.pl} \\
\and
Martin Lange \\
School of Electr. Eng. and Comp. Sc. \\
University of Kassel \\
Kassel, Germany \\
{\tt martin.lange@uni-kassel.de} \\
\and
Narad Rampersad \\
Department of Math/Stats \\
University of Winnipeg \\
515 Portage Ave. \\
Winnipeg, MB R3B 2E9, Canada \\
{\tt narad.rampersad@gmail.com} \\
\and
Jeffrey Shallit \\
School of Computer Science \\
University of Waterloo \\
Waterloo, ON  N2L 3G1, Canada \\
{\tt shallit@cs.uwaterloo.ca} \\
\and 
Marek Szyku{\l}a \\
Institute of Computer Science \\
University of Wroc{\l}aw \\
Wroc{\l}aw, Poland \\
{\tt msz@cs.uni.wroc.pl}}
\begin{document}
\maketitle
\begin{abstract}
We study the following natural variation on the classical universality problem:
given a language $L(M)$ represented by $M$ (e.g., a DFA/RE/NFA/PDA), does there exist an integer $\ell \geq 0$ such that $\Sigma^\ell \subseteq L(M)$?
In the case of an NFA, we show that this problem is NEXPTIME-complete, and the smallest such $\ell$ can be doubly exponential in the number of states.
This particular case was formulated as an open problem in 
2009, and our solution uses a novel and involved construction.
In the case of a PDA, we show that it is recursively unsolvable, while the smallest such $\ell$ is not bounded by any computable function
of the number of states.
In the case of a DFA, we show that the problem is NP-complete,
and $e^{\sqrt{n \log n} (1+o(1))}$ is an asymptotically tight upper bound for the smallest such $\ell$, where $n$ is the number of states.
Finally, we prove that in all these cases, the problem becomes computationally easier when the length $\ell$ is also given in binary in the input: it is polynomially solvable for a DFA, PSPACE-complete for an NFA, and co-NEXPTIME-complete for a PDA.
\end{abstract}
\section{Introduction}

The classical universality problem is the question, for a given language $L$ over an alphabet $\Sigma$, whether $L=\Sigma^*$.
Depending on how $L$ is specified, the complexity of this problem varies.
For example, when $L$ is given as the language accepted by a DFA $M$, the problem is solvable in linear time (reachability of a non-final state) and further is NL-complete \cite{Jones:1975}.
When $L$ is specified by an NFA or a regular expression, it is PSPACE-complete \cite{Aho&Hopcroft&Ullman:1974}. 
When $L$ is specified by a PDA (push-down automaton) or a context-free grammar, the problem is undecidable \cite{Hopcroft&Ullman:1979}.

Studies on universality problems have a long tradition in computer science and still attract much interest.
For instance, the universality has been studied for visibly push-down automata \cite{Alur&Madhusudan:2004}, where the question was shown to be 
decidable in this model (in contrast to undecidability in the ordinary model); timed automata \cite{Bertrand&Bouyer&Brihaye&Stainer:2011}; the language of all prefixes (resp., suffixes, factors, subwords) of the given language \cite{Rampersad&Shallit&Xu:2012}; and recently, for partially (and restricted partially) ordered NFAs \cite{Krotzsch&Masopust&Thomazo:2016}.

In this paper, we study a basic variation of the universality problem, where instead of testing the full language, we ask whether there is a single length that is universal for the language.
We focus on the following two problems:
\begin{problem}[Existential length universality]\label{pbm:elu}
Given a language $L$ represented by a machine $M$ of some type (DFA/RE/NFA/PDA) over an alphabet $\Sigma$ of a fixed size, does there exist an integer $\ell \geq 0$ such that $\Sigma^\ell \subseteq L$?
\end{problem}
\begin{problem}[Specified-length universality]\label{pbm:glu}
Given a language $L$ by a machine $M$ of some type (DFA/RE/NFA/PDA) over an alphabet $\Sigma$ of a fixed size and an integer $\ell$ (given in binary), is $\Sigma^\ell \subseteq L$?
\end{problem}
Furthermore, if such an $\ell$ exists, we are interested in how large the smallest $\ell$ can be.
\begin{definition}
The \emph{minimum universality length} of a language $L$ over an alphabet $\Sigma$ is the smallest integer $\ell \ge 0$ such that $\Sigma^\ell \subseteq L$.
\end{definition}

\subsection{Motivation}

From the mathematical point of view, Problems~\ref{pbm:elu} and~\ref{pbm:glu} are natural variations of universality that surprisingly, to the best of our knowledge, have not been thoroughly investigated in the literature.
Both problems can be seen as an interesting generalization of the famous Chinese remainder theorem to languages, in the sense that given periodicities with multiple periods, we ask where all these periodicities coincide.
Hence, languages stand as succinct representations of integers.
Moreover, both problems are motivated by potential applications in verification listed below.
Finally, the techniques developed to study them are interesting on their own and are likely to find applications elsewhere.
In particular, to solve the case of an NFA, we develop a novel formalism that helps to build NFAs with particular properties.
Indeed, similar constructions to some of the first ingredients in our proof (variables and the Incrementation Gadget), were recently independently discovered to solve the problem of a maximal chain length of the Green relation components of the transformation semigroup of a given DFA \cite{Fleischer&Kufleitner:2017}.

\subsubsection*{Games with imperfect information.}
We consider games with imperfect information on a labeled graph \cite{Doyen&Raskin:2011}, which are used to model, e.g., reactive systems.
In such a game there are two players,
{\it P} modeling the program and {\it E} modeling the environment.
The game starts at the initial vertex.
In one round, {\it P} chooses a label (action) and {\it E} chooses an edge (effect) from the current vertex with this label.
If {\it P} is deterministic and cannot see the choices of {\it E}, then a strategy for {\it P} is just a word over the alphabet of labels.
The game can end under various criteria, and the sequence of the resulting labels determines which player wins.

Under one of the simplest ending criteria, the game lasts for a given number of steps known to {\it P}.
This models the situation when we are interested in the status of the system after some known amount of time.
For example, this occurs if a system must work effectively for a specified duration, but in the end, we must be able to shut it down, which requires that there are no incomplete processes still running inside.
Another example could be a distributed system, where processes have limited possibilities to communicate with the others and are affected by the environment; in general, processes do not know if the system has reached its global goal; hence, for a given strategy, we may need a guarantee of reaching the goal after a known number of rounds, instead of monitoring termination externally.

The main question for such a game is: does {\it P} have a winning strategy?
This is equivalent to asking whether all strategies of {\it P} are losing.
In other words, if $L$ is the language of all losing strategies of {\it P}, then we ask whether all words of the given length are in $L$.
For instance, if the winning criterion for {\it P} is just being in one of the specified vertices, then $L$ is directly defined by the NFA obtained from the graph, where we mark all the non-specified vertices to be final.
We can also ask whether the system is unsafe, i.e., we cannot find a strategy for some number of steps, which corresponds to existential length universality.

One can mention the relationship with the ``firing squad synchronization problem'' \cite{Mazoyer:1987,Moore&Langdon:1968}.
In this classical problem, we are given a cellular automaton of $n$ cells, with one active cell, and the goal is to reach a state in which all cells are simultaneously active. Thinking of an evolving device where the state at time $i$ is represented by the strings of length $i$ in $L$, our problem concerns how many time steps are needed until ``all cells are active'', that is, until all strings of length $i$ are accepted.

\subsubsection*{Formal specifications.}
Our problems are strongly related to a few other questions that can be applied in formal verification, where program correctness is often expressed through inclusion problems \cite{Vardi&Wolper:1994}.
The universality problem is closely related to the inclusion problem: clearly, universality is a special case of inclusion if the underlying language model can express $\Sigma^*$; 
and inclusion can be reduced to universality when this model is closed under unions and includes some (simple) regular languages (possibly folklore, cf.\ \cite{Friedmann&Lange:2012}). 
These reductions carry through to the \emph{given} and the \emph{existential length inclusion} problems.
We can consider the constrained inclusion problems corresponding to the universality problems mentioned above; for instance, \emph{existential length inclusion} asks for two languages $K$ and $L$ whether $K \cap \Sigma^\ell \subseteq L$ for some $\ell$.
Given length inclusion contains the essence of a specialized program verification problem: do all program runs \emph{of a particular duration} adhere to a given specification?
Likewise, existential length universality can be used to check whether there is some number $\ell$ such that running the given program for exactly $\ell$ steps ensures that the specification is met.

Another question of potential interest in program verification is the \emph{bounded-length universality} problem, i.e., whether $\Sigma^{\le \ell} \subseteq L$ for a given $\ell$, resp.\ its inclusion variant.
This could then be used to check whether an implementation meets its specification up to some point, in order to know, for example, whether the safety of a program can be guaranteed for as long as it is terminated externally at some point. 
The complexity of bounded-length universality is the same as that of specified-length universality, which is easy to show by modifying our proofs.

\subsection{Contribution}

In Section~\ref{sec:dfa}, we consider the case where $M$ is a deterministic finite automaton.
Existential length universality is NP-complete, and there exist $n$-state DFAs for which the minimal universality length is of the form $e^{\sqrt{n \log n} (1+o(1))}$, which is the best possible even when the input alphabet is binary.
Specified-length universality is solvable in polynomial time.

In Section~\ref{sec:nfa}, we consider the case where $M$ is a nondeterministic finite automaton.
This particular case was formulated as an open question in May 2009, as mentioned in~\cite{ShallitBC4}, and the solution requires the most involved construction of all problems studied in this paper.
It is easy to show PSPACE-hardness (by modifying the proof from~\cite[Section~10.6]{Aho&Hopcroft&Ullman:1974}) and an NEXPTIME algorithm (by determinization and application of the bounds for DFAs).
We show that, in fact, the problem is NEXPTIME-complete for NFAs.
While to this end, we reduce a standard NEXPTIME-hard problem, designing the reduction requires a non-trivial insight into the structure of the problem and quite a bit of work.
We start with designing a particular intermediate formalism that forms a programming language and makes designing the reduction easier.
Also using the method, we show that the minimum universality length in the case of an NFA can be doubly exponential.
As an auxiliary question in the reduction, we formulate a problem of a specific subset of first-order logic and the Presburger arithmetic.
In the final reduction, we prove that the existential divisibility problem is NEXPTIME-complete.

In Section~\ref{sec:re}, we consider the case when $M$ is a regular expression.
Existential length universality is PSPACE-hard and in NEXPTIME, and there are examples where the minimal universality length is exponential.
The question about the exact complexity class remains open in this case.
Specified-length universality for REs and also for NFAs is PSPACE-complete, which follows from modifying the PSPACE-hardness proof of the usual universality \cite[Section~10.6]{Aho&Hopcroft&Ullman:1974}.
The difficulty in finding a solution is that regular expressions are a more restrictive representation than NFAs, so if the problem is NEXPTIME-complete, the proof requires an even more involed construction than that for NFAs.

Finally, in Section~\ref{sec:pda}, we study the problem where $M$ is a pushdown automaton.
Here, existential length universality is recursively unsolvable, while the minimal universality length grows faster than any computable function, which follows from the undecidability of the universality of a PDA.
On the other hand, specified-length universality is co-NEXPTIME-complete, which we show by another original construction\footnote{We thank an anonymous referee for pointing out that coNEXPTIME-hardness of given-length universality for PDA could also be obtained through a modification of the proof of \cite[Theorem~8.1]{Zetzsche:2016}.}, though less involved than that for NFAs, reducing from an exponential variant of the tiling problem \cite{Boas:1997}.

Our results are summarized in Table~\ref{tab:summary}.

\begin{table}\renewcommand{\arraystretch}{1.2}\small
\caption{Computational complexity of universality problems.}\label{tab:summary}
\newcommand{\rowt}[1]{\multirow{2}{*}{#1}}
\newcommand{\colt}[1]{\multicolumn{2}{c|}{#1}}
\begin{tabular}{|l|c|c|c|c|}\hline
                                &\textbf{DFA}     &\textbf{RE}     &\textbf{NFA}        &\textbf{PDA}         \\\hline
Universality                    & NL-c            & PSPACE-c       & PSPACE-c           & Undecidable         \\\hline
Existential length universality & \rowt{NP-c}     & PSPACE-hard,   & \rowt{NEXPTIME-c}  & \rowt{Undecidable}  \\
(Problem~\ref{pbm:elu})         &                 & in NEXPTIME    &                    &                     \\\hline
Specified-length universality   & \rowt{in PTIME} & \rowt{PSPACE-c}& \rowt{PSPACE-c}    & \rowt{co-NEXPTIME-c}\\
(Problem~\ref{pbm:glu})         &                 &                &                    &                     \\\hline
Minimal universality length     & subexponential  & \emph{open}    & doubly exponential & uncountable         \\\hline
\end{tabular}
\end{table}

While for proving hardness we use larger alphabets than binary, a standard binarization applies to our problems, so all the complexity results remain valid when the input alphabet is binary.

\begin{lemma}\label{lem:alphabet_binary}
Let $\Sigma$ be an alphabet of size $k \ge 2$.
\begin{enumerate}
\item For an $M$ being a DFA/NFA/PDA with $n$ states over $\Sigma$, we can construct in polynomial time a DFA (NFA, PDA, respectively) $M'$ over a binary alphabet with $(2^{\lceil \log_2 k\rceil}-1)n$ states such that its minimal universality length is equal to $\ell\cdot\lceil \log_2 k\rceil$ where $\ell$ is the minimal universality length for $M$, or it does not exist if the minimal universality length does not exist for $M$.
\item For a regular expression $M$ with $n$ input symbols over $\Sigma$, we can construct in polynomial time a regular expression $M'$ over a binary alphabet with at most $(2\lceil\log_2 k\rceil-1)n$ input symbols such that its the minimal universality length is equal to $\ell\cdot\lceil\log_2 k\rceil$ where $\ell$ is the minimal universality length for $M$, or it does not exist if the minimal universality length does not exist for $M$.
\end{enumerate}
\end{lemma}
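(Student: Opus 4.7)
The plan is to simulate every letter of $\Sigma$ by a binary word of length $m := \lceil \log_2 k \rceil$, so that universality at length $\ell$ in $M$ corresponds to universality at length $\ell m$ in $M'$ and no other lengths can be universal in $M'$.

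For part~(1), I would first fix any surjection $\pi\colon\{0,1\}^m \to \Sigma$ that agrees with some chosen injective binary encoding of $\Sigma$; when $k$ is not a power of two, the $2^m - k$ leftover binary words are assigned to arbitrary letters. Each state $q$ of $M$ is then expanded into a complete binary tree of depth $m$ whose root is $q$ itself and whose $2^m - 2$ internal non-root nodes are fresh states, yielding $n + n(2^m - 2) = n(2^m - 1)$ states in $M'$. The tree edges are labelled $0/1$ and the transition taken on the last (the $m$-th) bit of a word $w \in \{0,1\}^m$ is wired directly to the destination $\delta(q,\pi(w))$ of $M$: deterministically for a DFA, with the corresponding non-determinism for an NFA, and with the original stack operation for a PDA (the first $m-1$ bits leave the stack untouched). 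All intermediate states are declared non-accepting.

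Because only root states are accepting, $L(M') \subseteq (\{0,1\}^m)^*$, so no length not divisible by $m$ can be universal. For a word $w = w_1 \cdots w_\ell \in \{0,1\}^{\ell m}$ with $|w_j| = m$, running $M'$ from $q_0$ on $w$ reaches the same (set of) state(s) as running $M$ from $q_0$ on $\pi(w_1)\cdots\pi(w_\ell)$; surjectivity of $\pi$ then gives $\{0,1\}^{\ell m} \subseteq L(M') \Longleftrightarrow \Sigma^\ell \subseteq L(M)$, so the minimal universality length of $M'$ equals $m$ times that of $M$ (or neither exists). For part~(2), the naive substitution $a_i \mapsto \phi(a_i)$ of a length-$m$ codeword fails when $k < 2^m$ because binary words outside the image of $\phi$ would never be produced. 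The fix is a variable-length prefix code: choose binary codewords $c_1,\ldots,c_k$ of lengths $d_1,\ldots,d_k \ge 1$ satisfying Kraft equality $\sum_{i=1}^k 2^{-d_i} = 1$, which is obtainable by repeatedly merging sibling leaves of the complete depth-$m$ binary tree until exactly $k$ leaves remain. Each occurrence of $a_i$ in $M$ is replaced by $c_i (0+1)^{m-d_i}$, contributing $d_i + 2(m-d_i) = 2m - d_i \le 2m - 1$ binary symbol occurrences per replacement, hence at most $(2m-1)n$ overall; prefix-freeness and Kraft equality ensure that the languages $L\bigl(c_i(0+1)^{m-d_i}\bigr)$ partition $\{0,1\}^m$, so the argument of part~(1) transfers verbatim and gives the required scaling.

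The main point of care is the PDA case in part~(1): the $m-1$ intermediate binary transitions must be stack-preserving and the original stack update must be applied only on the $m$-th bit, so that no stack-dependent acceptance or $\epsilon$-transition can be triggered in the middle of a block and spoil the length-$m$ alignment. Everything else is routine bookkeeping, and polynomial-time constructibility is immediate from $2^m \le 2k$.
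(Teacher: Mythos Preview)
Your proposal is correct and follows essentially the same approach as the paper: part~(1) is identical (replace each state by a depth-$m$ binary tree and route the last bit according to a surjection $\{0,1\}^m\to\Sigma$), and your part~(2) via a Kraft-equality prefix code is just a cleaner way of stating the paper's ``replace each letter by one or a union of two length-$m$ binary words'' (your sibling-merging procedure, carried out on the bottom level, produces exactly codewords of length $m$ or $m-1$, i.e.\ singletons or pairs). Your explicit remark about keeping the PDA stack untouched on the first $m-1$ bits is a detail the paper leaves implicit.
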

\begin{proof}
(1) We create $M'$ over the alphabet $\{0,1\}$ by replacing every state with a full binary tree of height $\lceil\log_2 k\rceil-1$ (so with $2^{\lceil\log_2 k\rceil}-1$ states).
The transitions are set in the way that for an $i$'th letter $a_i \in \Sigma$, the binary representation of $i$ acts on the roots of the states in $M'$ as $a_i$ in $M$ on the corresponding states. The remaining binary representations just duplicate the action of any other one.
The final states of $M'$ are the roots of the trees corresponding to the final states of $M$.

(2) Similarly, a regular expression can be converted to one regular expression $M'$ over $\{0,1\}$ by replacing each $i$'th letter $a_i \in \Sigma$ with either its binary representation or the sum (union) of two binary representations, which all are of length $\lceil \log_2 k\rceil$. Since the number of representations $2^{\lceil\log_2 k\rceil}$ is smaller than $2k$, in this way all of them can be assigned to some letter.

It is easy to observe that for both constructions $M'$, all words of length $\ell$ are accepted by $M$ if and only if all words of length $\ell\cdot\lceil \log_2 k\rceil$ are accepted by $M'$.
\end{proof}

\section{The case where $M$ is a DFA}\label{sec:dfa}

\begin{theorem}\label{thm:DFA_given_length}
Specified-length universality (Problem~\ref{pbm:glu}) for DFAs is solvable in polynomial time (in the size of $M$ and $\log \ell$).
\end{theorem}
\begin{proof}
Start with a DFA $M = (Q, \Sigma, \delta, q_0, F)$ that we assume to be complete (that is, $\delta(q,a)$ is defined for all $q \in Q$ and $a \in \Sigma$).

From $M$ create a unary NFA $M'$ that is defined by taking every transition of $M$ labeled with an input letter and replacing that letter with the single letter $a$.
Now it is easy to see that $M$ accepts all strings of length $\ell$ if and only if every path of length $\ell$ in $M'$, starting with its initial state, ends in a final state.

Now create a Boolean matrix $B$ with the property that there is a $1$ in row $i$ and column $j$ if and only if $M'$ has a transition from $q_i$ to $q_j$, and $0$ otherwise.
It is easy to see that $M'$ has a path of length $\ell$ from state $q_i$ to state $q_j$ if and only if $B^\ell$ has a $1$ in row $i$ and column $j$, where by $B^\ell$ we mean the Boolean power of the matrix $B$.

So $M$ accepts all strings of length $\ell$ if and only if every $1$ in row $0$ (corresponding to $q_0$) of $B^\ell$ occurs only in columns corresponding to the final states of $M'$.
Hence, to verify that $M$ is length universal for some $\ell$, we simply guess $\ell$, compute $B$, raise $B$ to the $\ell$'th Boolean power using the
usual ``binary method'' or ``doubling up'' trick, and check the positions of the $1$'s in row $0$.
This can be done in polynomial time provided $\ell$ is exponentially bounded in magnitude.
\end{proof}

\begin{theorem}\label{thm:DFA_inNP}
Existential length universality (Problem~\ref{pbm:elu}) for DFAs is in NP.
\end{theorem}
\begin{proof}
Since for a given $\ell$ the problem can be polynomially solved, it is enough to guess $\ell$ and check.
To see that $\ell$ is exponentially bounded, we can argue that $\ell \leq 2^{|Q|^2}$.
This follows trivially because our matrix from the proof of Theorem~\ref{thm:DFA_given_length} is of dimension $|Q| \times |Q|$; after we see all $2^{|Q|^2}$ different powers, we have
seen all we can see, and the powers must cycle after that.
\end{proof}

Actually, we can do even better than the $2^{|Q|^2}$ bound in the proof.
It is an old result of Rosenblatt~\cite{Rosenblatt:1957} that powers of a $t \times t$ Boolean matrix are ultimately periodic with preperiod of size $\O(t^2)$ and period of size at most $e^{\sqrt{t \log t}(1+o(1))}$.
Thus we have

\begin{theorem}\label{thm:DFA_length_upper_bound}
Let $M$ be a DFA with $n$ states.
If there exists a minimal universality length $\ell$ for $M$, then $\ell \leq e^{\sqrt{n \log n}(1+o(1))}$.
\end{theorem}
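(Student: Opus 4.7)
The plan is to extend the matrix-power reduction already set up in the proof of Theorem~\ref{thm:DFA_inNP} and then feed the resulting Boolean matrix into Rosenblatt's periodicity bound. Concretely, if $M$ has $n$ states and $B$ is its $n \times n$ Boolean transition matrix (with alphabet symbols ignored, as in the passage to the unary NFA $M'$), then whether $\Sigma^\ell \subseteq L(M)$ holds is determined purely by $B^\ell$: one just checks that the $1$-entries in row $q_0$ of $B^\ell$ lie only in columns indexed by $F$. Hence the set $U = \{\ell \geq 0 : \Sigma^\ell \subseteq L(M)\}$ depends only on the sequence of powers $(B^\ell)_{\ell \geq 0}$.

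Next, I would invoke Rosenblatt's theorem \cite{Rosenblatt:1957}, cited in the paragraph preceding the statement: the sequence $(B^\ell)_{\ell \geq 0}$ is ultimately periodic, with preperiod $r = O(n^2)$ and period $p \leq e^{\sqrt{n \log n}(1+o(1))}$. As a direct consequence, the membership predicate ``$\ell \in U$'' is itself ultimately periodic in $\ell$ with the same parameters, so for every $\ell \geq r$ we have $\ell \in U$ if and only if $\ell + p \in U$.

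Now suppose $U$ is nonempty and let $\ell^\star$ be its minimum. If $\ell^\star \geq r + p$, then $\ell^\star - p \geq r$ still lies in the periodic regime and belongs to $U$, contradicting the minimality of $\ell^\star$. Therefore $\ell^\star < r + p \leq O(n^2) + e^{\sqrt{n \log n}(1+o(1))}$.

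The final step is a routine asymptotic simplification: since $n^2 = e^{2\log n}$ and $2\log n = o\bigl(\sqrt{n \log n}\bigr)$, the polynomial preperiod is absorbed into the $1+o(1)$ factor in the exponent of the period, and we recover $\ell^\star \leq e^{\sqrt{n \log n}(1+o(1))}$. There is essentially no obstacle in this plan beyond correctly translating ``ultimate periodicity of Boolean matrix powers'' into ``ultimate periodicity of the universality predicate in $\ell$'' and then verifying that the additive preperiod is negligible compared with the period; Rosenblatt's theorem does all the real work.
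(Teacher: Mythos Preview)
Your proposal is correct and follows essentially the same approach as the paper: both reduce to the Boolean transition matrix $B$ set up in Theorem~\ref{thm:DFA_inNP}, invoke Rosenblatt's bound on the preperiod and period of $(B^\ell)_{\ell\ge 0}$, and absorb the $O(n^2)$ preperiod into the $e^{\sqrt{n\log n}(1+o(1))}$ term. The paper in fact states the theorem as an immediate consequence of Rosenblatt's result without spelling out the minimality and absorption steps you give, so your write-up is a slightly more detailed version of the same argument.
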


The upper bound in the previous result is tight, even for a binary alphabet.

\begin{theorem}\label{thm:DFA_length_lower_bound}
For each sufficiently large $n$, there exists a binary DFA $M$ with $n$ states for which the minimal universality length $\ell$ is $\geq e^{\sqrt{n \log n}(1+o(1))}$.
\end{theorem}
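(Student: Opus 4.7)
The plan is to realize Landau's function inside a binary DFA. Recall that Landau's function
$g(n)=\max\{\lcm(m_1,\dots,m_k) : m_i \in \mathbb{Z}_{\ge 1},\ \sum_{i=1}^k m_i \le n\}$
satisfies $g(n)=e^{\sqrt{n\log n}(1+o(1))}$, and that (up to the $o(1)$ in the exponent) the optimum is attained by some collection of $k=\Theta(\sqrt{n/\log n})$ distinct primes $p_1<p_2<\dots<p_k$ of total size $\sum p_i \le n$. I will build a binary DFA whose minimal universality length is essentially $\prod p_i$.

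Pick $k$ and distinct primes $p_1,\dots,p_k$ with $\sum p_i \le n-(2^{\lceil\log_2 k\rceil}-1)$ and $\prod p_i = e^{\sqrt{n\log n}(1+o(1))}$; this is possible because $2^{\lceil\log_2 k\rceil}-1 = O(\sqrt{n/\log n}) = o(n)$, so siphoning off that many states still leaves enough room for an almost-optimal Landau configuration. Set $d=\lceil\log_2 k\rceil$. The DFA $M$ consists of two pieces glued along their boundary. First, a complete binary ``routing tree'' of depth $d$ rooted at the initial state $q_0$, in which each internal node sends its $a$- and $b$-transitions to its left and right children respectively; this contributes $2^d-1$ internal states. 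Second, $k$ pairwise disjoint cycles $C_1,\dots,C_k$ of lengths $p_1,\dots,p_k$, in each of which both letters advance the cycle by one position. The $2^d$ depth-$d$ positions of the tree are identified with the ``entry'' states (position $0$) of the cycles, ensuring every cycle receives at least one tree position (duplicating arbitrarily if $k<2^d$). The final states are exactly the $k$ cycle entries, and the reachable-state count is $(2^d-1)+\sum p_i \le n$; one pads with unreachable sink states if necessary to obtain exactly $n$ states.

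Now verify the minimal universality length of $M$. For $\ell<d$, every length-$\ell$ input ends at an internal tree node, which is non-final, so $M$ is not length-$\ell$ universal. For $\ell\ge d$, any length-$\ell$ word $w=uv$ with $|u|=d$ first descends the tree from $q_0$ according to $u$ to the entry of a cycle $C_{i(u)}$, and then spends its remaining $\ell-d$ letters in that cycle, arriving at position $(\ell-d)\bmod p_{i(u)}$. Because the routing tree is complete, as $u$ ranges over $\{a,b\}^d$ the index $i(u)$ takes every value in $\{1,\dots,k\}$. Hence $M$ is length-$\ell$ universal iff $(\ell-d)\equiv 0\pmod{p_i}$ for every $i$, which by primality of the $p_i$ forces $\ell-d\ge\lcm(p_1,\dots,p_k)=\prod p_i$. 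The smallest universality length is therefore exactly $d+\prod p_i=e^{\sqrt{n\log n}(1+o(1))}$.

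The main obstacle is the asymptotic bookkeeping: one must check that reserving $2^d-1=O(\sqrt{n/\log n})$ states for the routing tree does not erode the exponent of Landau's function, i.e.\ that $\log g(n-o(n))=\sqrt{n\log n}(1+o(1))$. This follows directly from the classical asymptotic for $g$ together with the elementary identity $\sqrt{(n-o(n))\log(n-o(n))}=\sqrt{n\log n}(1+o(1))$. Nuisances such as $k$ not being a power of two, choosing the precise prime list attaining Landau's value, or adding unreachable sink states to reach exactly $n$ states, all affect only lower-order terms and do not change the conclusion.
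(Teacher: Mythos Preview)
Your approach is essentially the paper's: route a binary tree of depth $d=\lceil\log_2 k\rceil$ into $k$ prime-length cycles, then invoke the asymptotics of Landau's function (the paper uses the first $t$ primes explicitly and appeals to the prime number theorem, but this is the same picture). The bookkeeping that $2^d-1=O(\sqrt{n/\log n})=o(n)$ states for the tree do not affect the exponent is correct.

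However, your construction as stated has a genuine bug. You declare the final states to be the cycle \emph{entries} (position $0$). Then for $\ell=d$, every word of length $d$ lands exactly on a cycle entry, which is final, so $M$ is already length-$d$ universal. More generally, the condition $(\ell-d)\equiv 0\pmod{p_i}$ for all $i$ does \emph{not} force $\ell-d\ge\prod p_i$; it also admits $\ell-d=0$. So the minimal universality length of your DFA is $d$, not $d+\prod p_i$.

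The fix is exactly what the paper does: in each cycle $C_i$, make the unique final state the one at position $p_i-1$ (one step before the entry), and keep the entries non-final. Then universality at length $\ell\ge d$ requires $(\ell-d)\equiv -1\pmod{p_i}$ for all $i$, i.e.\ $\prod p_i \mid (\ell-d+1)$, whose smallest non-negative solution is $\ell-d=\prod p_i-1$. With this correction the minimal universality length is $d-1+\prod p_i=e^{\sqrt{n\log n}(1+o(1))}$ and the rest of your argument goes through.
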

\begin{proof}
First we construct our DFA with $t$ input symbols:
There is a non-final initial state with transitions out on symbols $a_1,a_2,\ldots,a_t$ to cycles of size $p(1),p(2),\ldots,p(t)$, respectively, where $p(i)$ is the $i$'th prime number.
The transitions on each cycle are on all symbols of the alphabet.
Inside each cycle, all states are not final, except the state immediately before the state with an incoming transition from the initial state, which is final.
This DFA accepts the language 
$$\bigcup_{1 \leq i \leq t} a_i (\Sigma^{p(i)})^* \Sigma^{p(i)-1}.$$
For each length $\ell' < p(1) p(2) \cdots p(t) - 1$, there exists a prime number $p(i)$, $1 \leq i \leq t$, such that $\ell' \not\equiv \modd{p(i)-1} {p(i)}$,
so no string in $a_i \Sigma^{\ell'}$ is accepted.
However, for $\ell = p(1) p(2) \cdots p(t)$, all strings of length $\ell$ are accepted.

Now we convert the DFA to a binary DFA over $\{0,1\}$ by replacing the initial state with a full binary tree of height $h=\lceil\log_2 t \rceil-1$, so that the binary representation of $i$ maps the initial state (root) to the state from the cycle of length $p(i)$ (cf.\ Lemma~\ref{lem:alphabet_binary}).
This DFA has $p(1) + p(2) + \cdots + p(t) + 2^{h+1}-1$ states, and the least $\ell$ for which all strings of length $\ell$ is accepted is $p(1) p(2) \cdots p(t) + h-1$. 

From the prime number theorem we know that
$$p(1) + p(2) + \cdots + p(t) + \O(t) \sim \frac{1}{2} t^2 \log t,$$
(see, for example, \cite[p.~29]{Bach&Shallit:1996}) and
$$p(1) p(2) \cdots p(t) + \O(t) \sim e^{t \log t (1 + o(1))}$$
(see, for example, \cite[Theorem~4.4]{Apostol:1976}).
\end{proof}

\begin{theorem}\label{thm:DFA_NP-hard}
Existential length universality (Problem~\ref{pbm:elu}) is NP-hard for DFAs.
\end{theorem}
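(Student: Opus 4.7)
The plan is to reduce from 3SAT, using a Chinese-remainder-style encoding of truth assignments as lengths. Given a 3SAT instance $\phi$ with variables $x_1, \ldots, x_n$ and clauses $C_1, \ldots, C_m$, I would assign to each variable $x_i$ the $i$-th prime $p_i$; by the prime number theorem $p_n = O(n \log n)$. I then interpret a positive integer $\ell$ as the assignment $\sigma_\ell$ in which $x_i$ is set to true exactly when $p_i \mid \ell$, so by CRT every Boolean assignment of $x_1, \ldots, x_n$ equals $\sigma_\ell$ for some $\ell \geq 1$.

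The DFA $M$ is built over $\Sigma = \{a_1, \ldots, a_m\}$ as follows. There is a single non-accepting initial state $q_0$. For each clause $C_j$ on three variables with primes $p_{i_1}, p_{i_2}, p_{i_3}$, let $T_j := p_{i_1} p_{i_2} p_{i_3}$ and attach a cycle $D_j$ with states $s_j^0, s_j^1, \ldots, s_j^{T_j - 1}$ in which every letter advances the position by one modulo $T_j$. Set $\delta(q_0, a_j) = s_j^0$, and declare $s_j^k$ accepting exactly when the assignment to the three variables of $C_j$ determined by the residues of $k+1$ modulo $p_{i_1}, p_{i_2}, p_{i_3}$ satisfies $C_j$ (this handles positive and negated literals uniformly). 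The DFA has $O(1 + m(n \log n)^3)$ states and is clearly computable in polynomial time; Lemma~\ref{lem:alphabet_binary} then lifts the hardness to the binary alphabet.

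For correctness, any word of length $\ell \geq 1$ begins with some $a_j$, after which the computation inside $D_j$ depends only on the length and ends at $s_j^{(\ell - 1) \bmod T_j}$. Hence $\Sigma^\ell \subseteq L(M)$ iff for every $j$ this end state is accepting, iff $\sigma_\ell$ satisfies every $C_j$, iff $\sigma_\ell$ satisfies $\phi$; the case $\ell = 0$ is ruled out because $q_0 \notin F$. The main conceptual obstacle is that a deterministic cycle of length $p$ can track only a single arithmetic progression, whereas a clause is a disjunction of three literals; the idea that bypasses this is to choose $T_j$ as the product of the three relevant primes, so that by CRT we are free to mark as accepting exactly the residues for which the clause holds, while $T_j$ nonetheless remains polynomial since each prime is $O(n \log n)$. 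The remaining details (evaluating negations inside the accepting predicate, ruling out $\ell = 0$, and checking polynomial constructibility) are routine.
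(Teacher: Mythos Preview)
Your proposal is correct and is essentially the same reduction as the paper's: both encode truth assignments via residues modulo the first primes, use one alphabet symbol per clause to branch from a non-accepting initial state into a cycle of length $p_{i_1}p_{i_2}p_{i_3}$, and mark as accepting exactly those residues that satisfy the clause. The only cosmetic difference is the encoding convention (you use $p_i\mid \ell$ for ``true'' while the paper uses residue $1$ for ``true'' and $0$ for ``false''), which makes your version slightly cleaner since every $\ell$ corresponds to an assignment.
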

\begin{proof}
We reduce from 3-SAT.
Given a formula $\varphi$ in the 3-CNF form we create a DFA $M = M_\varphi$ having the property that there exists an integer $\ell\ge 0$ such that $M$ accepts all strings of length $\ell$ if and only if $\varphi$ has a satisfying assignment.

To do so, we use the ideas from the usual Chinese-remainder-theorem-based proof of the coNP-hardness of deciding if $L(M) = \Sigma^*$ for a unary NFA $M$.
\cite{Stockmeyer&Meyer:1973}.

Suppose there are $t$ variables in $\varphi$, say $v_1,v_2,\ldots,v_t$.
Then satisfying assignments are coded by integers which are either congruent to $0$ or $1$ modulo the first $t$ primes.
Let $p(i)$ be the $i$'th prime number.
If the integer is $1 \bmod p(i)$, then it corresponds to an assignment where $v_i$ is set to $1$;
if the integer is $0 \bmod p(i)$, then it corresponds to an assignment where $v_i$ is set to $0$.
Given a clause $C$ consisting of $3$ literals (variables $v_i, v_j, v_k$ or their negations), all integers corresponding to a satisfying assignment of this particular clause fall into a number of residue classes modulo $p(i) p(j) p(k)$.

We now construct a DFA with an alphabet $\Sigma$ consisting of the integers $1,2,\ldots,s$, where $s$ is the number of clauses.
The non-final initial state has a transition on each letter of $\Sigma$ to a cycle corresponding to the appropriate clause.
Inside each cycle, the transitions go to the next state on all letters of $\Sigma$.
Each cycle is of size $p(i) p(j) p(k)$, where $v_i, v_j, v_k$ are the variables appearing in the corresponding clause.
The final states in each cycle are the integers, modulo $p(i) p(j) p(k)$, that correspond to assignments satisfying that clause.

We claim this DFA accepts all strings of length $\ell$ if and only if $\ell$ corresponds to a satisfying assignment for $\varphi$.
To see this, note that if the DFA accepts all strings of length $\ell$ for some $\ell$, then the path inside each cycle must terminate at a final state, which corresponds to a satisfying assignment for each clause.
On the other hand, if $\ell$ corresponds to a satisfying assignment, then every string is accepted because it satisfies each clause, and hence corresponds to a path beginning with any clause number and entering the appropriate cycle.

The transformation uses polynomial time because the $i$'th prime number is bounded in magnitude by $\O(i \log i)$ (e.g.,~\cite{Rosser&Schoenfeld:1962}), and each cycle is therefore of size at most $\O((t \log t)^3)$, so the total number of states is $\O(s (t \log t)^3)$.
\end{proof}

\section{The case where $M$ is an NFA}\label{sec:nfa}

The classical universality problem for regular expressions and so for NFAs is known to be PSPACE-complete \cite[Section~10.6]{Aho&Hopcroft&Ullman:1974}.
Also, if the NFA does not accept $\Sigma^*$, then the length of the shortest non-accepted words is at most exponential.
Specified-length universality for NFAs is also PSPACE-complete (Theorem~\ref{thm:re_given_length}).

However, we show that existential length universality is harder: it is NEXPTIME-complete, and there are examples where 
the minimal universality length is approximately doubly exponential in the number of states of the NFA.

We begin with upper bounds, which follow from the results for DFAs.

\begin{proposition}\label{pro:nfa_upper_bound}
Let $M$ be an NFA with $n$ states.
If there exists an $\ell$ such that $M$ accepts all strings of length $\ell$,
then the smallest such $\ell$ is $\leq e^{2^{n/2} \sqrt{n \log {2}}(1+o(1))}$.
\end{proposition}
\begin{proof}
By determinizing $M$ to a DFA with at most $2^n$ states and applying Theorem~\ref{thm:DFA_length_upper_bound} we get
$$\ell \leq e^{\sqrt{2^n \log {2^n}}(1+o(1))} = e^{2^{n/2} \sqrt{n \log {2}}(1+o(1))}.$$
\end{proof}

\begin{proposition}
Existential length universality (Problem~\ref{pbm:elu}) for NFAs is in NEXPTIME.
\end{proposition}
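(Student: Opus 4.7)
The plan is to follow the same guess-and-verify template used in the NP proof for DFAs (Theorem~\ref{thm:DFA_inNP}), but applied after an implicit determinization. First I would invoke Proposition~\ref{pro:nfa_upper_bound}: if some universality length exists, then the smallest one already satisfies $\log \ell = 2^{O(n)}$. This means $\ell$ admits a binary description of length at most $2^{O(n)}$, which is exactly the budget of nondeterministic bits afforded by NEXPTIME, so we may guess $\ell$ in binary up front.

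For the verification phase I would avoid writing down the determinized DFA as a DFA object; instead I would construct directly the Boolean ``letter-step'' matrix $B$ of the subset construction, indexed by subsets $S \subseteq Q$, with $B[S][S']=1$ iff $S' = \delta_M(S,a)$ for some $a \in \Sigma$. This is a $2^n \times 2^n$ matrix and can be written down in $2^{O(n)}$ time. Next, mimicking the DFA argument, I would raise $B$ to the $\ell$-th Boolean power by repeated squaring; this uses $O(\log \ell) = 2^{O(n)}$ Boolean multiplications of matrices of dimension $2^n$, each in time $2^{O(n)}$, hence $2^{O(n)}$ overall. Finally, accept iff every $S$ with $B^\ell[\{q_0\}][S] = 1$ satisfies $S \cap F \neq \emptyset$, which is exactly the condition that every word of length $\ell$ drives $M$'s subset construction into an accepting subset, i.e.\ $\Sigma^\ell \subseteq L(M)$.

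The only thing left to check is the complexity bookkeeping: the nondeterministic guess (exponentially many bits) and the deterministic verification (exponential time in $n$) together fit within $2^{n^{O(1)}}$ nondeterministic time, so the procedure is in NEXPTIME. There is no real obstacle here — the content is essentially the conjunction of Proposition~\ref{pro:nfa_upper_bound} with the matrix-power verification from Theorem~\ref{thm:DFA_inNP}, compounded with the exponential blow-up of the subset construction — so the write-up mostly consists of confirming that the two exponential factors (the bit-length of $\ell$ and the size of $B$) interact correctly.
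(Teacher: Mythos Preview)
Your proposal is correct and is essentially the paper's proof unpacked: the paper simply determinizes $M$ to a DFA with at most $2^n$ states and invokes Theorem~\ref{thm:DFA_inNP}, whereas you spell out explicitly the guess of $\ell$ (bounded via Proposition~\ref{pro:nfa_upper_bound}) and the Boolean matrix powering on the $2^n\times 2^n$ subset-construction matrix that Theorem~\ref{thm:DFA_inNP} would perform. The content is identical; you have just made the black-box call to Theorem~\ref{thm:DFA_inNP} self-contained.
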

\begin{proof}
We determinize $M$ to a DFA with at most $2^n$ states.
Then by Theorem~\ref{thm:DFA_inNP}, existential length universality is solvable in nondeterministic exponential time.
\end{proof}

The difficult part is to show that existential length universality for NFAs is NEXPTIME-hard.
Note that the usual method which is applied to show PSPACE-hardness of the classical universality problem does not seem enough
in this case and, after a suitable modification, results only in a proof of PSPACE-hardness.
The reason for this difficulty is that, to show NEXPTIME-hardness, we need to be able to construct NFAs whose minimum universality length is larger than exponential, which is a non-trivial task in itself.
The NFA constructed by the reduction must have a polynomial size, whereas to solve an NEXPTIME-complete problem we may need exponential memory.
If the length of a word is superexponential, then some subsets of the states of the NFA must be repeated multiple times.

To overcome this major technical hurdle we construct an NFA with minimum universality length being roughly doubly exponential by using
an indirect approach.
We design the automaton such that there are many exponentially long cycles on subsets of the states and 
it accepts all words only for the lengths that are solutions given by the Chinese Remainder Theorem for these cycle lengths.
By exhibiting a family of NFAs with large minimal universality lengths we show that our construction is essentially tight.
The techniques are rather involved, and hence we first develop an intermediate formalism that will be used for both tasks.
Having developed our formalism, we are able to solve the first task.
To solve the second task, we proceed in two steps.
First, we reduce an auxiliary logic decision problem concerning the divisibility of integers to existential length universality through our formalism.
Second, we reduce a canonical NEXPTIME-complete problem to this divisibility problem.

\subsection{A programming language}\label{subsec:programming_in_NFAs}

We define a simple programming language that will be used to construct NFAs with particular properties in a convenient way.
Our language is nondeterministic, i.e., the programs can admit many possible computations of the same length. 
In contrast to the usual programming languages, we are interested only in this set of admitted computations by the program.

A program will be translated in polynomial time directly to an NFA, or, more precisely, to an extended structure called a \emph{gadget}, which is defined below.
A computation of the program will correspond to a word for the constructed NFA.
If the computation is not admitted, the word will be always accepted.
Otherwise, usually, the word will not be accepted, with some exceptions when we additionally make some states final in the NFA.

\subsubsection{Gadget definition}

Let $m \ge 1$ be a fixed integer.
A \emph{variable} $V$ is a set of states $\{v_1,\ldots,v_m,\bar{v}_1,\ldots,\bar{v}_m\}$.
These states are called \emph{variable states}, and $m$ is the \emph{width} of the variable.
Besides variable states, in our NFAs there will be also \emph{control flow states}, and the unique special final state $q_{\rm acc}$, which will be fixed by all transitions.

A \emph{gadget} $G$ is a 7-tuple $(P^G,\mathcal{V}^G,\Sigma^G,\delta^G,s^G,t^G,F^G)$.
When specifying the elements of such a tuple, we usually omit the superscript if it is clear from the context.
$P$ is a set of control flow states, $\mathcal{V}$ is a set of (disjoint) variables on which the gadget \emph{operates}, $s,t \in P$ are distinguished \emph{start} and \emph{target} control flow states, respectively, and $F \subseteq P$ is a set of final states.
The set of states of $G$ is
$Q = \{q_{\rm acc}\} \cup P \cup \bigcup_{V \in \mathcal{V}} V$.
Then $\delta\colon Q \times \Sigma \to 2^Q$ is the transition function, which is extended to a function $2^Q \times \Sigma^* \to 2^Q$ as usual.
We always have $\delta(q_{\rm acc},a) = \{q_{\rm acc}\}$ for every $a \in \Sigma$.

The NFA of $G$ is $(Q,\Sigma,\delta,s,F \cup \{q_{\rm acc}\})$.
A \emph{configuration} is a subset $C \subseteq Q$.
Given a configuration $C$, we say a state is \emph{active} if it belongs to $C$.
We say a configuration $C$ is \emph{proper} if it does not contain $q_{\rm acc}$.
Given a proper configuration $C$ and a word $w$, we say that $w$ is a \emph{proper computation from $C$} if the obtained configuration after reading $w$ is also proper, i.e., $\delta(C,w)$ is proper.
Therefore, from a non-proper configuration we cannot obtain a proper one after reading any word since $q_{\rm acc}$ is always fixed, and so every non-proper computation from $\{s\}$ is an accepted word by the NFA.

We say that a variable $V$ is \emph{valid} in a configuration $C \subseteq Q$ if for all $1 \le i \le m$, $v_i \in C$ if and only if $\bar{v}_i \notin C$.
In other words, the states $\bar{v}_1,\ldots,\bar{v}_m$ are complementary to the states $v_1,\ldots,v_m$.
A valid variable stores an integer from $\{0,\ldots,2^m-1\}$ encoded in binary; the states $v_1$ and $v_m$ represent the least and the most significant bit, respectively.
Formally, if $V$ is valid in a configuration $C$, then its \emph{value} $V(C)$ is defined as
$$V(C) = \sum_{\substack{1 \le i \le m\\v_i \in V \cap C}} \ 2^{i-1}.$$

We say that a configuration $C$ is \emph{initial} for a gadget $G$ if it is proper, contains the start state $s$ but no other control flow states, and the gadget's variables are valid in $C$ (if not otherwise stated, which is the case for some gadgets).
A \emph{final} configuration is a proper configuration that contains the target state $t$ and no other control flow states.
A \emph{complete computation} is a proper computation from an initial configuration to a final configuration.
Every gadget will possess some properties about its variables and the length of complete computations according to its semantics.
These properties are of the form that, depending on an initial configuration $C$, there exists or not a complete computation of some length from $C$ to a final configuration $C'$, where $C'$ also satisfies some properties.
Usually, proper computations from an initial configuration will have bounded length (but not always, as we will also create cycles).
Also usually, proper configurations will have exactly one active control flow state (with the exception of the Parallel Gadget, introduced later).
If a variable is not required to be valid in $C$, then these properties will not depend on its active states in $C$.

We start from defining \emph{basic} gadgets, which are elementary building blocks, and then we will define \emph{compound} gadgets, which are defined using the other gadgets inside.

\subsubsection{Basic gadgets}
\noindent\textbf{$\bullet$ Selection Gadget.}\\
This gadget is denoted by \Call{Select}{$V$}, where $V$ is a variable.
It allows a nondeterministic selection of an arbitrary value for $V$.
An initial configuration for this gadget does not require that $V$ is valid.
For every integer $c \in \{0,\ldots,2^m-1\}$ and for every initial configuration $C$, there exists a complete computation from $C$ to a final configuration $C'$ such that $V(C')=c$.

\begin{figure}[htb]\centering
\includegraphics[scale=1]{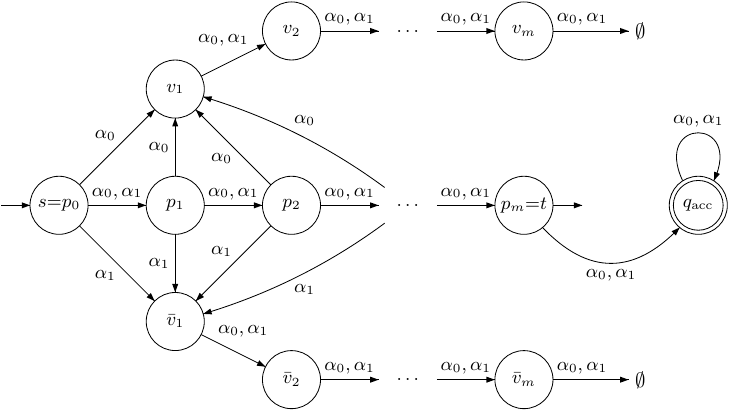}
\caption{Selection Gadget.}\label{fig:selection_gadget}
\end{figure}

The gadget is illustrated in Fig.~\ref{fig:selection_gadget}.
It consists of control flow states $P = \{s=p_0,p_1,\ldots,p_{m-1},p_m=t\}$, one variable $V$, and letters $\Sigma = \{\alpha_0,\alpha_1\}$.
The letters $\alpha_0$ and $\alpha_1$ allow moving the active control flow state over the states $p_0,p_1,\ldots,p_m$ and, at each transition, choosing either $v_1$ or $\bar{v}_1$ to be active.
Also, each $v_i$ and $\bar{v}_i$ are shifted to $v_{i+1}$ and $\bar{v}_{i+1}$, respectively, and both $v_m$ and $\bar{v}_m$ are mapped to no state ($\emptyset$), which ensures that the initial content of $V$ is neglected.
The transitions are defined as follows:
\begin{eqnarray*}
\delta(p_i,\alpha_0) & = & \{p_{i+1},v_1\} \text{ for $i=0,\ldots,m-1$},\\
\delta(p_i,\alpha_1) & = & \{p_{i+1},v'_1\} \text{ for $i=0,\ldots,m-1$},\\
\delta(p_m,\alpha_0) = \delta(p_m,\alpha_1) & = & \{q_{\rm acc}\},\\
\delta(v_i,\alpha_0) = \delta(v_i = \alpha_1) & = & \{v_{i+1}\} \text{ for $i=1,\ldots,m-1$},\\
\delta(\bar{v}_i,\alpha_0) = \delta(\bar{v}_i = \alpha_1) & = & \{\bar{v}_{i+1}\} \text{ for $i=1,\ldots,m-1$},\\
\delta(v_m,\alpha_0) = \delta(\bar{v}_m = \alpha_1) & = & \emptyset.
\end{eqnarray*}
Note that a word $w = \alpha_{b_1} \ldots \alpha_{b_m}$, for $b_i \in \{0,1\}$, sets the value of the variable to $\sum_{1 \le i \le m} 2^{i-1} b_i$.

The semantic properties are summarized in the following
\begin{lemma}\label{lem:selection_gadget}
Let $C$ be an initial configuration for the Selection Gadget \Call{Select}{$V$}.
For every value $c \in \{0,\ldots,2^m-1\}$, there exists a complete computation in $\Sigma^m$ from $C$ to a proper configuration $C'$ such that $V(C')=c$.
Every complete computation has length $m$, and every longer computation is not proper.
\end{lemma}
\begin{proof}
This follows from the construction in a straightforward way.
After reading a word $w \in \{\alpha_0,\alpha_1\}^m$ we get that $t$ is active, and for every $i=1,\ldots,m$ either $v_i$ or $\bar{v}_i$ is active, depending on the $i$'th letter.
Thus, for every value of $V$ there is a unique word $w \in \{\alpha_0,\alpha_1\}^m$ resulting in setting this value.
If $w$ is shorter than $m$, then $t$ cannot become active, since the shortest path from $s$ to $t$ has length $m$.
Longer computations are not proper since both letters map $t$ to $q_{\rm acc}$.
\end{proof}

\noindent\textbf{$\bullet$ Equality Gadget.}\\
This gadget is denoted by $U=V$, where $U$ and $V$ are two distinct variables.
It checks if the values of the valid variables $U$ and $V$ are equal in the initial configuration.
If so, the gadget admits a complete computation, which is of length $m$; otherwise, every word of length at least $m$ is a non-proper computation.

\begin{figure}[htb]\centering
\includegraphics[scale=1]{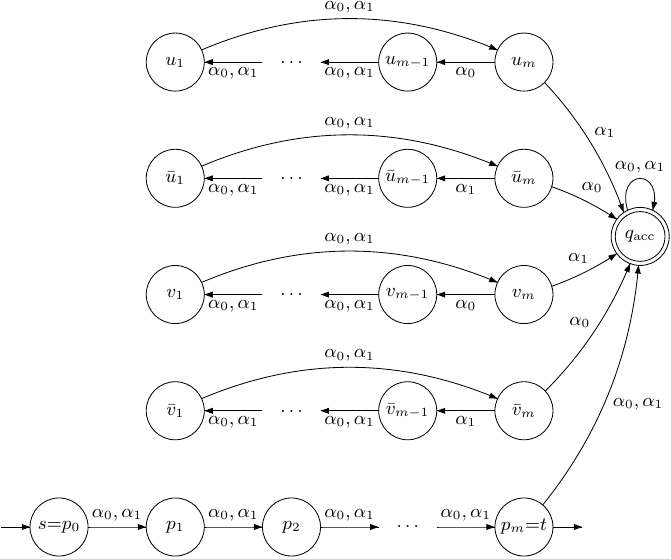}
\caption{Equality Gadget.}\label{fig:equality_gadget}
\end{figure}

The gadget is illustrated in Fig.~\ref{fig:equality_gadget}.
It consists of control flow states $P = \{s=p_0,p_1,\ldots,p_m=t\}$, two variables $U$ and $V$, and letters $\Sigma = \{\alpha_0,\alpha_1\}$.
The letters $\alpha_0$ and $\alpha_1$ allow moving the active control flow state over the states $s=p_0,p_1,\ldots,p_m=t$ and, at each transition, checking if the corresponding positions of $U$ and $V$ agree.
The transitions are defined in the same way for both variables, and they cyclically shift their states.
\begin{eqnarray*}
\delta(v_1,\alpha_0) = \delta(v_1 = \alpha_1) & = & \{v_m\},\\
\delta(v_i,\alpha_0) = \delta(v_i = \alpha_1) & = & \{v_{i-1}\} \text{ for $i=1,\ldots,m-1$},\\
\delta(v_m,\alpha_0) & = & \{v_{m-1}\},\\
\delta(v_m,\alpha_1) & = & \{q_{\rm acc} \},\\
\delta(\bar{v}_1,\alpha_0) = \delta(\bar{v}_1 = \alpha_1) & = & \{\bar{v}_m\},\\
\delta(\bar{v}_i,\alpha_0) = \delta(\bar{v}_i = \alpha_1) & = & \{\bar{v}_{i-1}\} \text{ for $i=1,\ldots,m-1$},\\
\delta(\bar{v}_m,\alpha_0) & = & \{q_{\rm acc}\},\\
\delta(\bar{v}_m,\alpha_1) & = & \{\bar{v}_{m-1}\},\\
\delta(u_1,\alpha_0) = \delta(u_1 = \alpha_1) & = & \{u_m\},\\
\delta(u_i,\alpha_0) = \delta(u_i = \alpha_1) & = & \{u_{i-1}\} \text{ for $i=1,\ldots,m-1$},\\
\delta(u_m,\alpha_0) & = & \{u_{m-1}\},\\
\delta(u_m,\alpha_1) & = & \{q_{\rm acc}\},\\
\delta(\bar{u}_1,\alpha_0) = \delta(\bar{u}_1 = \alpha_1) & = & \{\bar{u}_m\},\\
\delta(\bar{u}_i,\alpha_0) = \delta(\bar{u}_i = \alpha_1) & = & \{\bar{u}_{i-1}\} \text{ for $i=1,\ldots,m-1$},\\
\delta(\bar{u}_m,\alpha_0) & = & \{q_{\rm acc}\},\\
\delta(\bar{u}_m,\alpha_1) & = & \{\bar{u}_{m-1}\},\\
\delta(p_i,\alpha_0) = \delta(p_i,\alpha_1) & = & \{p_{i+1}\} \text{ for $i=0,\ldots,m-1$},\\
\delta(p_m,\alpha_0) = \delta(p_m,\alpha_1) & = & \{q_{\rm acc}\}.
\end{eqnarray*}

\begin{lemma}\label{lem:equality_gadget}
Let $C$ be an initial configuration with valid variables $U$ and $V$ for the Equality Gadget $U=V$.
When $U(C)=V(C)$, there exists a complete computation in $\Sigma^m$ from $C$ to a proper configuration $C'$.
Moreover, every complete computation has length $m$ and is such that $U(C')=U(C)$ and $V(C')=V(C)$.
Longer computations are not proper, and when $U(C) \neq V(C)$, every computation of length at least $m$ is not proper.
\end{lemma}
\begin{proof}
After reading a word $w \in \{\alpha_0,\alpha_1\}^m$ we get that $t$ is active and no shorter word has this property.
If $u_i \in C$ then the $(i+1)$'st letter of $w$ (or first letter if $i=m$) must be $\alpha_0$.
Similarly, if $\bar{u}_i \in C$ then the $(i+1)$'st letter of $w$ (or first if $i=m$) must be $\alpha_1$.
The same holds for the states of $V$.
Thus, if $u_i \in C$ and $\bar{v}_i \in C$, or if $\bar{u}_i \in C$ and $v_i \in C$, then $q_{\rm acc}$ cannot be avoided by any such a word $w$.
Otherwise, there exists a unique word $w$ of length $m$ such that $t$ becomes active and $q_{\rm acc}$ does not.
\end{proof}

\noindent\textbf{$\bullet$ Inequality Gadget.}\\
This gadget is denoted by $U \neq V$, where $U$ and $V$ are two distinct variables.
It is similar to the Equality Gadget and checks if the values of valid variables $U$ and $V$ are different.

\begin{figure}[htb]\centering
\includegraphics[scale=1]{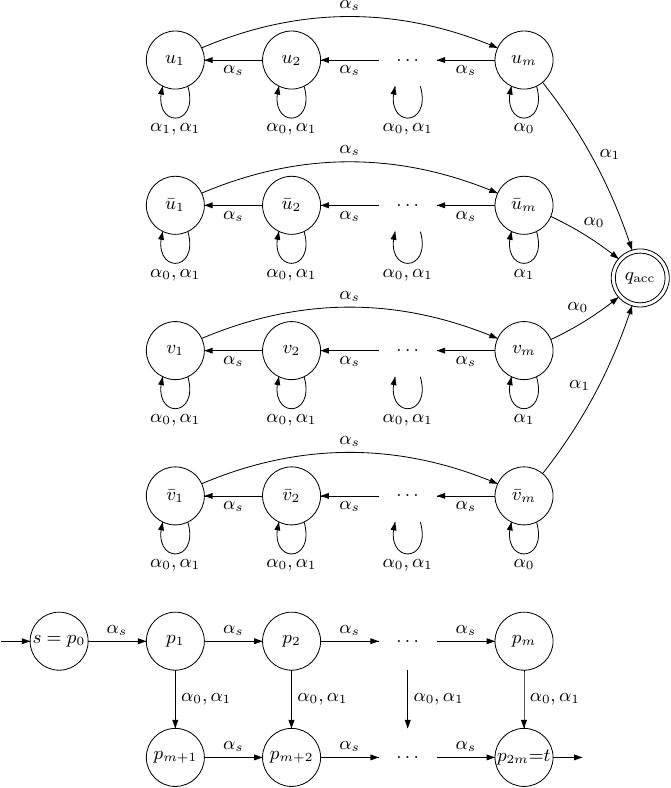}
\caption{Inequality Gadget. All omitted transitions go to $q_{\rm acc}$.}\label{fig:inequality_gadget}
\end{figure}

The gadget is illustrated in Fig.~\ref{fig:inequality_gadget}.
It consists of control flow states $P = \{s=p_0,p_1,\ldots,p_{2m}=t\}$, two variables $U$ and $V$, and letters $\Sigma = \{\alpha_0,\alpha_1,\alpha_s\}$.
The letter $\alpha_s$ allows moving the active control flow state, and it cyclically shifts the states of both variables.
Its transition function is defined as follows:
\begin{eqnarray*}
\delta(u_1,\alpha_s) & = & \{u_m\},\\
\delta(u_i,\alpha_s) & = & \{u_{i-1}\} \text{ for $i=1,\ldots,m$},\\
\delta(\bar{u}_1,\alpha_s) & = & \{\bar{u}_m\},\\
\delta(\bar{u}_i,\alpha_s) & = & \{\bar{u}_{i-1}\} \text{ for $i=1,\ldots,m$},\\
\delta(v_1,\alpha_s) & = & \{v_m\},\\
\delta(v_i,\alpha_s) & = & \{v_{i-1}\} \text{ for $i=1,\ldots,m$},\\
\delta(\bar{v}_1,\alpha_s) & = & \{\bar{v}_m\},\\
\delta(\bar{v}_i,\alpha_s) & = & \{\bar{v}_{i-1}\} \text{ for $i=1,\ldots,m$},\\
\delta(p_i,\alpha_s) & = & \{p_{i+1}\} \text{ for $i=0,\ldots,m-1$},\\
\delta(p_{m+i},\alpha_s) & = & \{p_{m+i+1}\} \text{ for $i=1,\ldots,m-1$},\\
\delta(p_m,\alpha_s) = \delta(p_{2m},\alpha_s) & = & \{q_{\rm acc}\}.
\end{eqnarray*}
At some point, at the position where the variables differ, either $\alpha_0$ or $\alpha_1$ can be applied, which also switches the active control flow state from $p_i$ to the corresponding $p_{m+i}$. Their transitions are defined as follows:
\begin{eqnarray*}
\delta(u_i,\alpha_0) = \delta(u_i,\alpha_1) & = & \{u_i\} \text{ for $i=1,\ldots,m-1$},\\
\delta(\bar{u}_i,\alpha_0) = \delta(\bar{u}_i,\alpha_1) & = & \{\bar{u}_i\} \text{ for $i=1,\ldots,m-1$},\\
\delta(v_i,\alpha_0) = \delta(v_i,\alpha_1) & = & \{v_i\} \text{ for $i=1,\ldots,m-1$},\\
\delta(\bar{v}_i,\alpha_0) = \delta(\bar{v}_i,\alpha_1) & = & \{\bar{v}_i\} \text{ for $i=1,\ldots,m-1$},\\
\delta(u_m,\alpha_0) & = & \{u_m\},\\
\delta(\bar{v}_m,\alpha_0) & = & \{\bar{v}_m\},\\
\delta(\bar{u}_m,\alpha_1) & = & \{\bar{u}_m\},\\
\delta(v_m,\alpha_1) & = & \{v_m\},\\
\delta(\bar{u}_m,\alpha_0) = \delta(v_m,\alpha_0) = \delta(u_m,\alpha_1) = \delta(\bar{v}_m,\alpha_1) & = & \{q_{\rm acc}\},\\
\delta(p_i,\alpha_0) = \delta(p_i,\alpha_1) & = & \{p_{m+i}\} \text{ for $i=1,\ldots,m$},\\
\delta(p_{m+i},\alpha_0) = \delta(p_{m+i},\alpha_1) & = & \{q_{\rm acc}\} \text{ for $i=1,\ldots,m$}.
\end{eqnarray*}

\begin{lemma}\label{lem:inequality_gadget}
Let $C$ be an initial configuration with valid variables $U$ and $V$ for the Inequality Gadget $U \neq V$.
When $U(C) \neq V(C)$, there exists a complete computation in $\Sigma^{m+1}$ from $C$ to a proper configuration $C'$.
Moreover, every complete computation has length $m+1$ and is such that $U(C')=U(C)$ and $V(C')=V(C)$.
Longer computations are not proper, and when $U(C) = V(C)$, every computation of length at least $m+1$ is not proper.
\end{lemma}
\begin{proof}
Consider a word $w$ of length $m+1$.
If $q_{\rm acc}$ does not become active, then by the structure of control flow states, $t$ must become active and $w$ contains exactly $m$ occurrences of $\alpha_s$ and one occurrence of either $\alpha_0$ or $\alpha_1$.
If $\alpha_0$ is the $i$'th letter of $w$ ($2 \le i \le m+1$), then it must be that $u_{i-1} \in C$ and $v_{i-1} \notin C$.
This is dual for $\alpha_1$.
Therefore, there must exist a position at which the variables differ.

Conversely, if the variables differ at an $i$'th position, then either the word $\alpha_s^i \alpha_0 \alpha_s^{m-i}$ or $\alpha_s^i \alpha_1 \alpha_s^{m-i}$ does the job.
\end{proof}

\noindent\textbf{$\bullet$ Incrementation Gadget.}\\
This gadget is denoted by $V{\mathrel{++}}$, where $V$ is a variable.
It increases the value of the valid variable $V$ by $1$.
If the value of $V$ is the largest possible ($2^m-1$), then the gadget does not allow to obtain a proper configuration by any word of length $m+1$.
Variable $V$ must be valid in an initial configuration.

\begin{figure}[htb]\centering
\includegraphics[scale=1]{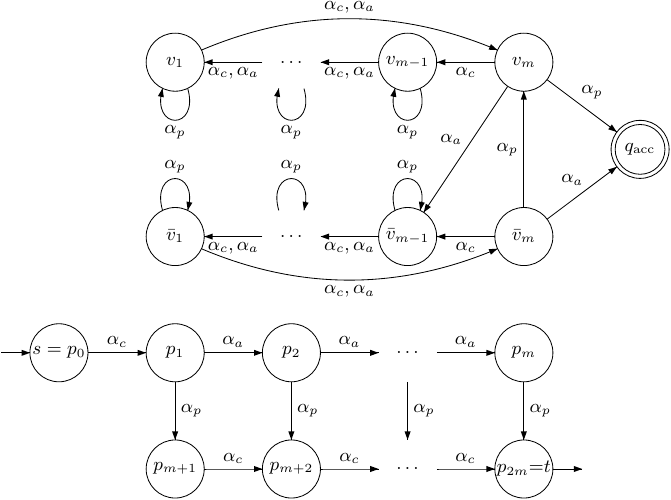}
\caption{Incrementation Gadget. All omitted transitions go to $q_{\rm acc}$.}\label{fig:incrementation_gadget}
\end{figure}

The gadget is illustrated in Fig.~\ref{fig:incrementation_gadget}.
It consists of control flow states $P = \{s=p_0,p_1,\ldots,p_{2m}=t\}$, one variable $V$, and letters $\Sigma = \{\alpha_a,\alpha_c,\alpha_p\}$.
The gadget performs a written addition of one to the value of $V$ interpreted in binary.
First, the letter $\alpha_c$ begins the process: this is the only letter that can be applied when $p_0$ is active.
Its transitions are defined as follows:
\begin{eqnarray*}
\delta(v_1,\alpha_c) & = & \{v_m\},\\
\delta(v_i,\alpha_c) & = & \{v_{i-1}\}\text{ for $i=2,\ldots,m$},\\
\delta(\bar{v}_1,\alpha_c) & = & \{\bar{v}_m\},\\
\delta(\bar{v}_i,\alpha_c) & = & \{\bar{v}_{i-1}\}\text{ for $i=2,\ldots,m$},\\
\delta(p_0,\alpha_c) & = & \{p_1\},\\
\delta(p_i,\alpha_c) & = & \{q_{\rm acc}\}\text{ for $i=1,\ldots,m$},\\
\delta(p_{m+i},\alpha_c) & = & \{p_{m+i+1}\}\text{ for $i=1,\ldots,m-1$},\\
\delta(p_{2m},\alpha_c) & = & \{q_{\rm acc}\}.
\end{eqnarray*}
Then, the letter $\alpha_a$ shifts the states of $V$ while also mapping ones ($v_m$) to zeros ($\bar{v}_{m-1}$), and it must be applied until the $m$'th position in $V$ is empty.
Every time when it is applied, the $m$'th position is cleared.
Its transitions are defined as follows:
\begin{eqnarray*}
\delta(v_1,\alpha_a) & = & \{v_m\},\\
\delta(v_i,\alpha_a) & = & \{v_{i-1}\}\text{ for $i=2,\ldots,m-1$},\\
\delta(v_m,\alpha_a) & = & \{\bar{v}_{m-1}\},\\
\delta(\bar{v}_1,\alpha_a) & = & \{\bar{v}_m\},\\
\delta(\bar{v}_i,\alpha_a) & = & \{\bar{v}_{i-1}\}\text{ for $i=2,\ldots,m-1$},\\
\delta(\bar{v}_m,\alpha_a) & = & \{ q_{\rm acc} \},\\
\delta(p_0,\alpha_a) & = & \{q_{\rm acc}\},\\
\delta(p_i,\alpha_a) & = & \{p_{i+1}\}\text{ for $i=1,\ldots,m-1$}.\\
\delta(p_{m+i},\alpha_a) & = & \{q_{\rm acc}\}\text{ for $i=1,\ldots,m$}.
\end{eqnarray*}
Then, the letter $\alpha_p$ must be applied, which fills the $m$'th position in $V$, and the active state $p_i$ is moved to the corresponding state $p_{m+i}$.
Its transition are as follows:
\begin{eqnarray*}
\delta(v_m,\alpha_p) & = & \{q_{\rm acc}\},\\
\delta(\bar{v}_m,\alpha_p) & = & \{v_m\},\\
\delta(v_i,\alpha_p) & = & v_i\text{ for $i=1,\ldots,m-1$},\\
\delta(\bar{v}_i,\alpha_p) & = & \bar{v}_i\text{ for $i=1,\ldots,m-1$},\\
\delta(p_0,\alpha_p) & = & \{q_{\rm acc}\},\\
\delta(p_i,\alpha_p) & = & \{p_{m+i}\}\text{ for $i=1,\ldots,m$},\\
\delta(p_{m+i},\alpha_p) & = & \{q_{\rm acc}\}\text{ for $i=0,\ldots,m$}.
\end{eqnarray*}
Finally, the letter $\alpha_c$ finishes the cyclic shifting of $V$, while moving the active control flow state over $p_{m+i},\ldots,p_{2m}$.

\begin{lemma}\label{lem:incrementation_gadget}
Let $C$ be an initial configuration with valid variable $V$ for the Incrementation Gadget $V{\mathrel{++}}$.
If $V(C) < 2^m-1$, then there exists a complete computation in $\Sigma^{m+1}$ from $C$ to a proper configuration $C'$.
Moreover, every complete computation has length $m+1$ and is such that $V(C')=V(C)+1$.
Longer computations are not proper, and if $V(C) = 2^m-1$, then every computation of length at least $m+1$ is not proper.
\end{lemma}
\begin{proof}
Consider a word $w$ of length $m+1$ and suppose that the obtained configuration $C'$ is proper.
So $t \in C'$ because of moving the active state in $P$ from $s$ to $t$.
Then $w$ must have the following form:
$$w = \alpha_c \alpha_a^i \alpha_p \alpha_c^{m-1-i},$$
for some $i \in \{0,\ldots,m-1\}$.
Consider the $j$'th occurrence of $\alpha_a$ ($1 \le j \le i$). 
If after reading it the current configuration is still proper, then it must be that $\bar{v}_j \notin C$ and so $v_j \in C$, since $v_j$ and $\bar{v}_j$ were mapped to $v_m$ and $\bar{v}_m$ by $\alpha_c \alpha_a^{j-1}$.
Next, it must be that $v_i \notin C$, because of the application of $\alpha_p$.
Since $w$ cyclically shifts $V$ exactly $m$ times, we end up with $C'$ such that:
\begin{itemize}
\item $v_j \in C'$ and $\bar{v}_j \notin C'$ for $j < i$,
\item $v_i \in C'$,
\item $v_j \in C'$ if and only if $v_j \in C$, and $\bar{v}_j \in C'$ if and only if $\bar{v}_j \in C$, for $j > i$.
\end{itemize}
Thus the value $V(C')$ is $V(C) - 2^1 - \dots - 2^{i-1} + 2^i = V(C)+1$.

If $V(C) < 2^m-1$, then there exists a smallest $i \le m$ such that
$v_i \notin C$, and there exists the unique word $w$ of that form which does the job.
\end{proof}

\noindent\textbf{$\bullet$ Assignment Gadget.}\\
This gadget is denoted either by $U \gets c$ or by $U \gets V$, where $c \in \{0,\ldots,2^m-1\}$ and $U$ and $V$ are two distinct variables.
It assigns to $U$ either the fixed constant $c$ or the value of the other variable $V$.
Variable $V$ must be valid in an initial configuration, but $U$ does not have to be.

The gadget consists of two control flow states $P = \{s,t\}$, a variable $U$ or two variables $U,V$, and the unary alphabet $\Sigma = \{\alpha\}$.
The transitions of $\alpha$ map $s$ to $t$, and additionally map either $s$ to the states of $U$ encoding value $c$ or the states of $V$ to the corresponding states of $U$.
The transitions are defined as follows:
\begin{eqnarray*}
\delta(u_i,\alpha) = \delta(\bar{u}_i,\alpha) & = & \emptyset,\\
\delta(t,\alpha) & = & \{q_{\rm acc}\}.
\end{eqnarray*}
If it assigns a fixed value $c$ then:
\begin{eqnarray*}
\delta(s,\alpha) & = & \{t\} \cup \{v_i \mid \text{$i$'th least bit of $c$ in binary is $1$}\} \\
                 &   & \cup\ \{\bar{v}_i \mid \text{$i$'th least bit of $c$ in binary is $0$}\}.
\end{eqnarray*}
If it assigns the value of $V$ then:
\begin{eqnarray*}
\delta(s,\alpha) & = & \{t\},\\
\delta(v_i,\alpha) & = & \{u_i,v_i\},\\
\delta(\bar{v}_i,\alpha) & = & \{\bar{u}_i,\bar{v}_i\}.
\end{eqnarray*}

In fact, the case $U \gets V$ could be alternatively implemented by a Selection Gadget followed by an Equality Gadget, although it will add more states and letters.

\noindent\textbf{$\bullet$ Waiting Gadget.}\\
This gadget is denoted by $\Call{Wait}{}_D$, where $D$ is a fixed positive integer.
This is a very simple gadget which just does nothing for $D$ number of letters.
There is exactly one complete computation, which has length $D$.

The gadget consists of states $Q = \{s=p_0,p_1,\ldots,p_D=t\}$ and the unary alphabet $\Sigma = \{\alpha\}$.
The transitions of $\alpha$ are defined as follows:
\begin{eqnarray*}
\delta(p_i,\alpha) & = & \{p_{i+1}\} \text{ for $i=0,\ldots,D-1$,}\\
\delta(p_D,\alpha) & = & \{q_{\rm acc}\}.
\end{eqnarray*}

\subsubsection{Joining gadgets together}

Compound gadgets are defined by other gadgets, which are joined together in the way specified by a program.
The method of definition is the only difference, as compound gadgets are objects of the same type as basic gadgets.
They will be also used incrementally to define further compound gadgets.

The general scheme for creating a compound gadget by joining gadgets $G_1,\ldots,G_k$ operating on variables from the sets $\mathcal{V}_1,\ldots,\mathcal{V}_k$ (all of width $m$), respectively, is as follows:
\begin{enumerate}
\item There are fresh (unique) control flow states of the gadgets, and there are the variables from $\mathcal{V}_1 \cup \cdots \cup \mathcal{V}_k$.
Thus when gadgets operate on the same variable, its states are shared.
\item The alphabet contains fresh (unique) copies of the letters of the gadgets.
\item Final states in the gadgets are also final in the compound gadget.
\item The transitions are defined as in the gadgets, whereas the transitions of a letter from a gadget $G_i$ map every control flow state that does not belong to $G_i$ to $\{q_{\rm acc}\}$ and fix the states of the variables on which $G_i$ does not operate.
\item Particular definitions of compound gadgets may additionally identify some of the start and target states of the gadgets and may add more (fresh) control flow states and letters.
\end{enumerate}

In our constructions, the control flow states with their transitions will form a directed graph, where the out-degree at every state of every letter is one (except the Parallel Gadget, defined later, which is an exception from the above scheme) -- it either maps a control flow state to another one or to $q_{\rm acc}$.
States of variables will never be mapped to control flow states.
This will ensure that during every proper computation from an initial configuration, exactly one control flow state is active.
The active control flow state will determine which letters can be used by a proper computation, i.e., the letters from the gadget owning this state (but in which it is not the target state).

Moreover, we will ensure that whenever a proper computation activates the start state of an internal gadget $G_i$, the current configuration restricted to the states of $G_i$ is an initial configuration for $G_i$ -- this boils down to assuring that the variables required to be valid have been already initialized (e.g., by a Selection Gadget or an Assignment Gadget).
Hence, complete computations for the compound gadget will contain complete computations for the internal gadgets, and the semantic properties of the compound gadget are defined in a natural way from the properties of the internal gadgets.

Now we define basic ways to join gadgets together.
Let $G_1,\ldots,G_k$ be some gadgets with start states $s^{G_1},\ldots,s^{G_k}$ and target states $t^{G_1},\ldots,t^{G_k}$, respectively.

\medskip
\noindent\textbf{$\bullet$ Sequence Gadget.}\\
For each $i=1,\ldots,k-1$, we identify the target state $t^{G_i}$ with the start state $s^{G_{i+1}}$.
Then $s^{G_1}$ and $t^{G_k}$ are respectively the start and target states of the Sequence Gadget.
We represent this construction by writing {\small $I_1\ \ldots\ I_k$}.
Complete computations for this gadget are concatenations of complete computations for the internal gadgets.

\begin{figure}[htb]\centering
\includegraphics[scale=1]{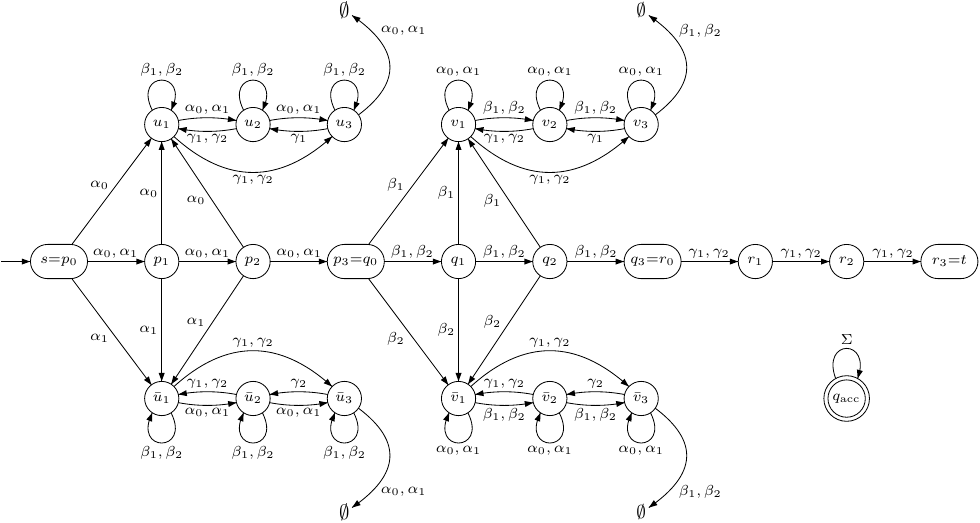}
\caption{The complete NFA of the Sequence Gadget {\small $\textsc{select}(U)\ \textsc{select}(V)\ U=V$}.
All omitted transitions go to $q_{\rm acc}$.
The states $p_i,q_i,r_i$ and letters $\alpha_i,\beta_i,\gamma_i$ belong to the three gadgets, respectively, that is:
$p_i=p_i^{\textsc{select}(U)}$, $\alpha_i=\alpha_i^{\textsc{select}(U)}$,
$q_i=p_i^{\textsc{select}(V)}$, $\beta_i=\alpha_i^{\textsc{select}(V)}$,
$r_i=p_i^{U=V}$, $\gamma_i=\alpha_i^{U=V}$.
}\label{fig:sequence_gadget_example}
\end{figure}

For example, for $k=3$ and $m=3$, the Sequence Gadget {\small $\Call{Select}{U}\ \Call{Select}{V}\ U=V$} is shown in Fig.~\ref{fig:sequence_gadget_example}.
It has the property that every complete computation has length $3m=9$ and is a concatenation of complete computations for the three gadgets; a final configuration $C'$ is such that $U(C')=V(C')$.
There exists a complete computation for every possible value of both variables, and longer computations are not proper.

\medskip
\noindent\textbf{$\bullet$ Choose Gadget.}\\
This gadget allows selecting one of the given gadgets nondeterministically.
We add a fresh start state $s$ and $k$ unique letters $\alpha_1,\ldots,\alpha_k$.
The action of a letter $\alpha_i$ maps $s$ to $\{s^{G_i}\}$, maps control flow states from the gadgets to $\{q_{\rm acc}\}$, and fixes variables states.
All target states $t^{G_i}$ are identified into the target state $t$ of the Choose Gadget.
We represent this construction by:
\begin{algorithmic}[0]\small
\Choose
\State $I_1$
\OrChoose
\State $\ldots$
\OrChoose
\State $I_k$
\EndChoose
\end{algorithmic}

Note that for this gadget there may exist complete computations of different lengths, even for the same initial configuration.
Nevertheless, there exists an upper bound on the length such that every computation longer than this bound is not proper (which is $1$ plus the maximum from the bounds for the internal gadgets).

\medskip
\noindent\textbf{$\bullet$ If--Else Gadget.}\\
It joins three gadgets, where $G_1$ is either an Equality Gadget or an Inequality Gadget, and $G_2$ and $G_3$ are any gadgets.
This construction is represented by:
\begin{algorithmic}[0]\small
\LineIfElse{$G_1$}{$G_2$}{$G_3$}
\end{algorithmic}
The third gadget $G_3$ may be empty, and then we omit the \emph{else} part.
The gadget is implemented as follows:
\begin{algorithmic}[0]\small
\Choose
\State $G_1$
\State $G_2$
\OrChoose
\State $\lnot G_1$
\State $G_3$
\EndChoose
\end{algorithmic}
where by $\lnot G_1$ we represent the negated version of $G_1$, i.e., the corresponding Inequality Gadget if $G_1$ is an Equality Gadget, and the corresponding Equality Gadget if $G_1$ is an Inequality Gadget.
Thus, a complete computation contains first a nondeterministic guess whether the variables in $G_1$ are equal, which is then verified, and then there is a complete computation for one of the two gadgets.

\medskip
\noindent\textbf{$\bullet$ While Gadget.}\\
This is easily constructed using \emph{If--Else}, where $G_3$ is empty, and the target state of $G_2$ is identified with the start state of the If--Else Gadget, which also becomes the start state of the While Gadget.
The target state $t$ of the While Gadget is the target state of {\small ${\lnot G_1}$}.

\medskip
\noindent{\textbf{$\bullet$ While-True Gadget.}\\
A special variant of the previous one is with $\mathbf{true}$ condition instead of $G_1$.
Therefore, it degenerates to the gadget $G_2$ with the target and the start states identified.
This is the only gadget that has outgoing transitions from its target state.
In contrast to the other constructions, there may exist infinitely many complete computations for this gadget (which are concatenations of shorter complete computations -- iterations of the loop).
We will be using it only as the last gadget in a Sequence Gadget, so there will be no possibility to leave this gadget.

\subsubsection{Arithmetic gadgets}

Now, we define gadgets for performing arithmetic operations by writing suitable programs, which represent these gadgets.
In each program we also define \emph{external} and \emph{internal} variables.
Internal variables are always fresh and unique and only this gadget operates on them, whereas external variables may be shared.

\medskip
\noindent{$\bullet$ \textbf{Addition Gadget.}}\\
The addition of two variables is denoted by $W \gets U+V$ and defined by Alg.~\ref{alg:addition}.
It is basically a Sequence Gadget with two Assignment Gadgets and a While Gadget.
It operates on four different variables.
Variables $U$ and $V$ must be valid in an initial configuration, while $W$ and $X$ do not have to be.

The semantic properties are such that, for an initial configuration $C$, if $U(C)+V(C) < 2^m-1$ then there exists a complete computation to a final configuration $C'$ such that $W(C')=U(C)+V(C)$.
If the result $U(C)+V(C)$ is larger than $2^m-1$, then the gadget does not admit a complete computation, because at some point we would have to go through the Incrementation Gadget in line~5 when the value of $U$ is equal to $2^m-1$.
Furthermore, every computation long enough is not proper.

\begin{algorithm}\caption{Addition Gadget $W \gets U+V$.}\label{alg:addition}
\begin{algorithmic}[1]\small
\item[\textbf{External variables:}]{$W,U,V$}
\item[\textbf{Internal variable:}]{$X$}
\State{$W \gets U$}\Comment{Assignment Gadget}
\State{$X \gets 0$}\Comment{Assignment Gadget}
\While{$X \neq V$}\Comment{While Gadget with Inequality Gadget}
  \State{$X{\mathrel{++}}$}\Comment{Incrementation Gadget}
  \State{$W{\mathrel{++}}$}\Comment{Incrementation Gadget}
\EndWhile
\end{algorithmic}
\end{algorithm}

\medskip
\noindent{\textbf{$\bullet$ Multiplication Gadget.}}\\
The multiplication of two variables is denoted by $W \gets U\cdot V$ and defined by Alg.~\ref{alg:multiplication}.
It uses an Addition Gadget to add $V(C)$ times the value $U(C)$ to the output variable $W$, where $C$ is an initial configuration.
As for Addition Gadgets, if the result $U(C)\cdot V(C)$ is larger than $2^m-1$, then this gadget does not admit a complete computation.

\begin{algorithm}\caption{Multiplication Gadget $U \cdot V$.}\label{alg:multiplication}
\begin{algorithmic}[1]\small
\item[\textbf{External variables:}]{$U,V,W$}
\item[\textbf{Internal variables:}]{$X,W'$}
\State{$W \gets 0$}
\State{$X \gets 0$}
\While{$X \neq V$}
  \State{$X{\mathrel{++}}$}
  \State{$W' \gets W+U$}\Comment{Addition Gadget}
  \State{$W \gets W'$}
\EndWhile
\end{algorithmic}
\end{algorithm}

\medskip
\noindent{\textbf{$\bullet$ Primality Gadget.}}\\
A primality test of the value of a variable $P$ is defined by Alg.~\ref{alg:primality}.
A complete computation is possible if and only if $P$ is prime.
We test whether $P$ is prime by enumerating all pairs of integers $2 \le X,Y < P$ and checking whether $X \cdot Y = P$.

\begin{algorithm}\caption{Primality Gadget.}\label{alg:primality}
\begin{algorithmic}[1]\small
\item[\textbf{External variable:}]{$P$}
\item[\textbf{Internal variables:}]{$X,Y,Z$}
\State{$X \gets 2$}
\While{$X \neq P$}
  \State{$Y \gets 2$}
  \While{$Y \neq P$}
    \State{$Y{\mathrel{++}}$}
    \State{$Z \gets X \cdot Y$}
    \State{$Z \neq P$}
  \EndWhile
  \State{$X{\mathrel{++}}$}
\EndWhile
\end{algorithmic}
\end{algorithm}

We will also need the negated version of this gadget (testing if $P$ is not prime), which can be implemented by selecting arbitrary values for two integers $X,Y$, and verifying that the values are not from $\{0,1,P\}$ and that $X\cdot Y = P$.
From now, a Primality Gadget or its negated version can be also used in an If-Else or a While gadget.

\medskip
\noindent{\textbf{$\bullet$ Prime Number Gadget.}}\\
Given two variables $U$ and $P$, this gadget assigns the $U(C)$'th prime number to $P$, where $C$ is an initial configuration.
It enumerates all integers $P=2,3,4,\ldots$, checks which are prime, and counts the prime ones in a separate variable $X$.
When $P$ becomes the $U(C)$'th prime number, the computation ends.
The gadget does not admit a complete computation when the $U(C)$'th prime number exceeds $2^m-1$, or when $U(C)=0$.

\begin{algorithm}\caption{Prime Number Gadget $P \gets U\text{'th prime number}$.}\label{alg:prime_number}
\begin{algorithmic}[1]\small
\item[\textbf{External variables:}]{$P,U$}
\item[\textbf{Internal variable:}]{$X$}
\State{$P \gets 2$}
\State{$X \gets 1$}
\While{$X \neq U$}
  \State{$P{\mathrel{++}}$}
  \LineIf{$P$ is prime}{$X{\mathrel{++}}$}
\EndWhile
\end{algorithmic}
\end{algorithm}

\medskip

The number of states in our NFAs of the gadgets is in $\O(dm)$, where $d$ is the length of the program measured by the number of basic gadgets instantiated plus the number of variables, and its alphabet has size $\varTheta(d)$.
For a given program (and an integer $m$), we can easily construct the corresponding NFA in polynomial time.

For each of the defined gadgets so far, except for While Gadget in general, there always exists an upper bound on the length of every complete computation, and every longer computation is not proper.
This bound is at most exponential in the size of the gadget, i.e., $\O(2^{dm})$, because proper computations cannot repeat the same configuration.

\subsection{An NFA with a large minimal universality length}\label{subsec:NFA_lower_bound}

Our first application is to show a lower bound on the maximum minimal universality length.
Its idea will be further extended to show the lower bound on the computational complexity.

\begin{algorithm}\caption{Large minimal universality length.}\label{alg:long}
\begin{algorithmic}[1]\small
\item[\textbf{Variables:}]{$X,Y$}
\State{\Call{Select}{$Y$}}\Comment{$[m]$}
\State{$X \gets 0$}\Comment{[$1$]}
\While{$\mathbf{true}$}
  \Choose\Comment{$[1]$}
    \State{$X = Y$}\Comment{$[m]$}
    \State{$X \gets 0$}\Comment{[$1$]}
    \State{[start final state] $\Call{Wait}{}_{m+1}$}\Comment{[$m+1$]}
  \OrChoose
    \State{$X \ne Y$}\Comment{$[m+1]$}
    \State{$X{\mathrel{++}}$}\Comment{[$m+1$]}
  \EndChoose
\EndWhile
\end{algorithmic}
\end{algorithm}

\begin{figure}[htb]\centering
\includegraphics[scale=1]{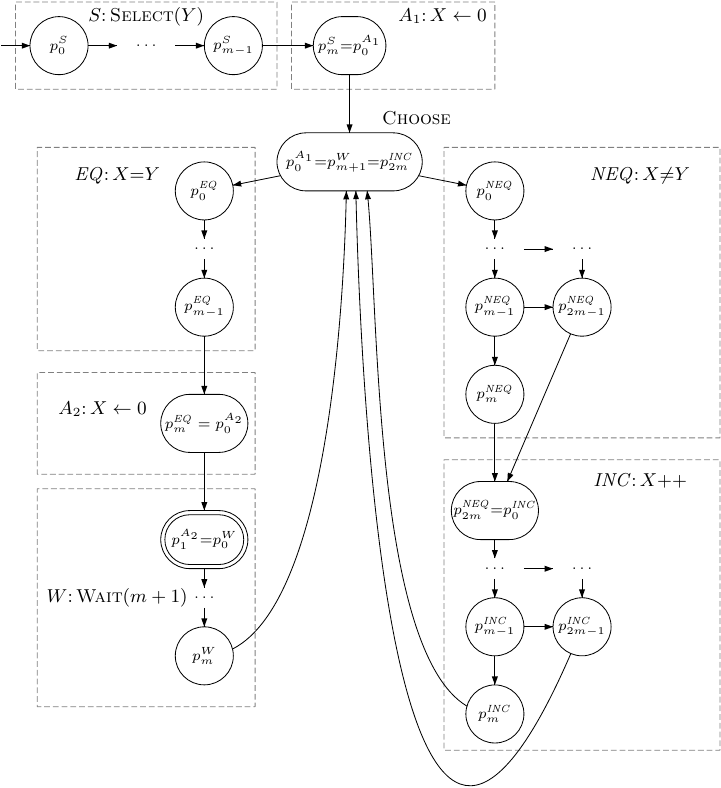}
\caption{The control flow states in the NFA of Alg.~\ref{alg:long} with their membership in particular gadgets.
Omitted transitions go to $q_{\rm acc}$.}\label{fig:alg_long}
\end{figure}

Alg.~\ref{alg:long} gives the program encoding our NFA, and Fig.~\ref{fig:alg_long} shows the control flow states of this NFA.
The numbers in the brackets [\, ] at the right denote the length of complete computations for the gadget (or for a part of it) in the current line.
In line~7, the annotation [start final state] indicates that the start control flow state of this Waiting Gadget is final, so the NFA of the program has two final states in total.

The idea of the program is as follows:
In the beginning, we select an arbitrary value for $Y$, and then in an infinite loop, we increment $X$ modulo $Y+1$.
The Choose Gadget in lines~4--11, is in fact, an If-Else Gadget with condition $X = Y$, unrolled for easier calculation of computation lengths.
Every iteration (complete computation of the Choose Gadget) of the loop takes the same number of letters ($2m+3$), hence given a computation length we know that we must perform $d-1$ complete iterations and end in the $d$'th iteration, for some $d$.
A proper computation of this length can avoid the final state in line~7 only in the iterations where the value of $X$ does not equal the value of $Y$.
We can ensure this for every length smaller than $\lcm(1,2,\ldots,2^m)\cdot(2m+3)$, as we can always select $Y$ such that $Y+1$ does not divide $d+1$, but it is not possible for length $\lcm(1,2,\ldots,2^m)\cdot(2m+3)$.

\medskip
\noindent\textbf{The analysis of the NFA of Alg.~\ref{alg:long}.}
Now, we have a very technical part, which is to calculate the size of the NFA of Alg.~\ref{alg:long} and ensure that the lengths are correct (see also Fig.~\ref{fig:alg_long}).

First, in line~1 any value for the variable $Y$ can be chosen, and this takes exactly $m$ letters of the Selection Gadget.
In line~2 the Assignment Gadget takes $1$ letter.
The While-True Gadget only means that the start and target states of the internal Choose Gadget are identified.
In line~4 a nondeterministic branch either to line~5 or~9 is performed, which takes $1$ letter.
Line~5 contains an Equality Gadget, which takes $m$ letters.
In line~6 we have the second Assignment Gadget, which takes $1$ letter.
In line~7 there is a Waiting Gadget that takes $m+1$ letters; there is also indicated that the start control flow state, which is also the target state of the second Assignment Gadget, is final.
In line~9 the Inequality Gadget takes $m+1$ letters, and the Incrementation Gadget in line~10 also takes $m+1$ letters.
Summarizing, each complete iteration of the while loop (thus a complete computation of the Choose Gadget) takes exactly $2m+3$ letters, regardless of the nondeterministic choice.

We count the number of states and the number of letters in the NFA.
We have two variables with $2m$ states each.
The Selection Gadget in line~1 has $m$ states (excluding the target state) and it has $2$ letters.
The Assignment Gadget in line~2 adds just $1$ control flow state and $1$ letter.
The while loop does not introduce any new states nor letters, but just the target and the start states of the internal Choose Gadget are identified.
In line~4 the Choose Gadget adds $1$ state and $2$ letters for branching.
In line~5 the Equality Gadget adds $m$ control flow states and $2$ letters.
In line~6 the Assignment Gadget adds $1$ state and $1$ letter.
In line~7 the Waiting Gadget adds $m+1$ states and $1$ letter.
In line~9 the Inequality Gadget adds $2m$ control flow states and $3$ letters.
Also, in line~10 the Incrementation Gadget adds $2m$ control flow states and $3$ letters.
The target state of the Waiting Gadget and of the Incrementation Gadget were identified with the start state of the Choose Gadget, thus we have already counted it.
Finally, there is the state $q_{\rm acc}$.
Summarizing, the NFA has $4m+m+1+1+m+1+(m+1)+2m+2m+1=11m+5$ states and $2+1+2+2+1+1+3+3=15$ letters.

Now, we prove that the NFA has a long minimal universality length.
\begin{lemma}\label{lem:long}
For a given $m$, the NFA of Alg.~\ref{alg:long} has minimal universality length
$$\lcm(1,2,\ldots,2^m)\cdot(2m+3).$$
The number of states of this NFA is $11m+5$ and the size of its alphabet is $15$.
\end{lemma}
\begin{proof}
There are two final states in the NFA: $q_{\rm acc}$ and the start state of the Waiting Gadget indicated in line~7.
Thus, every non-accepted word must be a proper computation and such that the current configuration does not contain this final state from line~7.

First, we show that there exists a non-accepted word for every length smaller than $\lcm(1,2,\ldots,2^m)\cdot(2m+3)$.
Observe that for every length there exists a word $w$ being a proper computation of the program.
Moreover, there exists such a word $w$ for every value of $Y$ selected in line~1 (if $|w|\ge m$).
Consider such a word $w$ for some value of $Y$.
From the beginning of the program to (a configuration with) the final state a proper computation takes at least $2m+3$ letters, so if $|w| < 2m+3$ then $w$ is not accepted.
Since every iteration of the while loop takes exactly $2m+3$ letters, if $|w|$ is not divisible by $2m+3$, then the final state in line~7 cannot be active after reading the whole $w$.
Suppose that $|w|$ of length at least $2m+3$ is divisible by $2m+3$ and let $d = |w|/(2m+3) \ge 1$.
Then the computation of $w$ performs exactly $d-1$ iterations of the while loop and ends in the $d$'th iteration.
Depending on the value of $Y$, after reading $w$ the active control flow state may be either the final state in line~7 or the start state of the Incrementation Gadget in line~10.
If $|w| < \lcm(1,2,\ldots,2^m)\cdot(2m+3)$, then $d < \lcm(1,2,\ldots,2^m)$.
So we can find a value for $Y$ ($1 \le Y \le 2^m-1$) such that $d$ is not divisible by $Y+1$.
The computation $w$ is such that there are $Y$ iterations in which $X$ is incremented and one iteration in which $X$ is reset to $0$, which is repeated during the whole computation.
So at the beginning of an $i$'th iteration the value of $X$ is equal to $(i-1) \bmod (Y+1)$.
Since $d$ is not divisible by $Y+1$, at the beginning of the $d$'th iteration we have $X=(d-1) \bmod (Y+1) \neq Y$, and so the computation finishes at the first state of the Incrementation Gadget in line~10.

It remains to show that every word $w$ of length $\lcm(1,2,\ldots,2^m)\cdot(2m+3)$ is accepted.
Suppose, contrary to what we want, that $w$ is not accepted, which means by definition that it must be a proper computation.
Hence, in the beginning, it chooses some value for $Y$.
Then it performs iterations in the while loop.
Since every iteration takes exactly $2m+3$ letters, exactly $\lcm(1,2,\ldots,2^m)-1$ complete iterations must be performed, and the computation ends in the $\lcm(1,2,\ldots,2^m)$'th iteration.
Regardless of the selected value for $Y$, $\lcm(1,2,\ldots,2^m)$ is divisible by $Y+1$.
Hence, at the beginning of the last iteration, we have $X=Y$, and there remain $m+2$ letters to read.
Now, if $w$ chooses the second branch, the computation cannot pass the test in line~9, since for this Inequality Gadget there is no complete computation and every computation longer than $m+1$ is not proper.
So $w$ must choose the first branch, which after $m+2$ letters results in a configuration with the final state in line~7.
\end{proof}

It remains to calculate the bound in terms of the number of states.
\begin{theorem}\label{thm:lower_bound_nfa}
For a $15$-letter alphabet, the minimal universality length can be as large as
$$e^{2^{n/11} (1+o(1))}.$$
\end{theorem}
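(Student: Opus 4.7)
The plan is to combine Lemma~\ref{lem:long} with the Prime Number Theorem. By Lemma~\ref{lem:long}, for every positive integer $m$ there is an NFA over a $15$-letter alphabet with exactly $n = 11m+5$ states whose minimal universality length equals
\[
L(m) = \lcm(1,2,\ldots,2^m)\,(2m+3).
\]
So the only remaining tasks are to estimate $L(m)$ asymptotically and to translate $m$ back to $n$.

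For the estimate I would invoke Chebyshev's second function $\psi(N) := \log \lcm(1,2,\ldots,N)$, for which the Prime Number Theorem gives $\psi(N) = N(1+o(1))$ as $N \to \infty$. Setting $N = 2^m$ yields
\[
\log \lcm(1,2,\ldots,2^m) = 2^m(1+o(1)).
\]
The factor $2m+3$ contributes only an additive $O(\log m)$ to $\log L(m)$, which is negligible compared to $2^m$, so $L(m) = e^{2^m(1+o(1))}$.

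For the change of variables, I would solve $n = 11m+5$ to get $m = (n-5)/11$, so that
\[
2^m = 2^{(n-5)/11} = 2^{(n/11)(1 - 5/n)} = 2^{(n/11)(1+o(1))},
\]
which is precisely the form absorbed by the outer $(1+o(1))$ in the exponent. Composing the two estimates yields a minimal universality length of $e^{2^{n/11}(1+o(1))}$, as required. For values of $n$ not of the form $11m+5$, one pads the NFA with $O(1)$ dummy control-flow states, which perturbs the exponent by at most $o(1)$ and therefore does not affect the stated asymptotics.

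The main (minor) obstacle is precisely this asymptotic bookkeeping between the two sources of $(1+o(1))$ error (the Chebyshev estimate and the change of variables) together with the fact that Lemma~\ref{lem:long} only directly produces NFAs on the arithmetic progression $n = 11m+5$; the bulk of the genuine work has already been done in establishing Lemma~\ref{lem:long} itself.
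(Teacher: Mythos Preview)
Your proposal is correct and essentially identical to the paper's own proof: both invoke Lemma~\ref{lem:long}, apply the Prime Number Theorem (via $\psi(N)=\log\lcm(1,\ldots,N)\sim N$) to get $L(m)=e^{2^{m}(1+o(1))}$, substitute $m=\lfloor (n-5)/11\rfloor$, and pad with $O(1)$ unused states for general $n$. Your explicit mention of Chebyshev's function and your more careful discussion of the padding are minor expository improvements, but the argument is the same.
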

\begin{proof}
From the prime number theorem, we have
$$\lcm(1,2,\ldots,2^m)\cdot(2m+3) \in (2m+3) \cdot \exp(2^m (1+o(1)) = \exp(2^m (1+o(1))).$$
For a sufficiently large number of states $n$ we can construct the NFA from Alg.~\ref{alg:long} for $m = \lfloor (n-5)/11\rfloor$, and add $n-m$ unused states.
Then we obtain
$$\exp(2^m (1+o(1))) = \exp(2^{(n-5)/11} (1+o(1))) = \exp(2^{n/11} (1+o(1))).$$
\end{proof}

\begin{remark}
The constants in Lemma~\ref{lem:long} and Theorem~\ref{thm:lower_bound_nfa} are not optimal and could be further optimized.
The witness NFA has been obtained directly from the construction, whereas, for example, some control flow states could be manually shared and the number of letters reduced.
\end{remark}

\subsection{Controlling the computation length}\label{subsec:controlling_length}

As we noted, because of Choose Gadgets, complete computations may have different lengths.
For example, the Addition Gadget for an initial configuration $C$ performs $V(C)$ iterations of its internal while loop.
Moreover, two or more branches of a Choose Gadget may admit complete computations of different lengths even for the same current configuration.
This is an obstacle that makes it difficult or impossible to further rely on the exact length $|w|$ of a proper computation, based on which we would like to decide if $w$ must be accepted.
Therefore, if we want to still use our constructions, we need a possibility to ensure that all complete computations have a fixed known length, and furthermore, that there are no proper computations longer than that length.

\medskip
\noindent\textbf{$\bullet$ Delaying Gadget.}\\
The first new ingredient is the Delaying Gadget $\Call{Delay}{}_D$, where $D$ is a fixed integer $\ge 0$.
This is a stronger version of the Waiting Gadget.
It has the advantage that it can wait for an exponential number of letters in the size of the gadget, although by the cost of that not all computation lengths are possible to encode.

Let $D \ge 1$ be a fixed integer and let $m' \ge 1$ be an integer such that $D \le 2^{m'}-1$.
The program uses two variables $X$ and $Y$, both of width $m'$.
The width $m'$ can be different from $m$, which is used for all variables in all other gadgets, but this is allowed since the variables $X$ and $Y$ are always internal and will never be shared.
The Delaying Gadget is defined by Alg.~\ref{alg:delaying_gadget} and denoted by $\Call{Delay}{}_D$.

The length of every complete computation is precisely $2+D(1+2(m'+1))+1+m'$, and every longer computation is not proper.
Let $T(D)$ denote this exact length when we take $m'$ to be the minimum possible, so $m'=\lceil\log_2 (D+1)\rceil$.
Thus, $T(D) \in \varTheta(D\log D)$.

\begin{algorithm}\caption{Delaying Gadget.}\label{alg:delaying_gadget}
\begin{algorithmic}[1]\small
\item[\textbf{Internal variables:}]{$X,Y$ of width $m'$}
\State{$X \gets 0$}\Comment{[$1$]}
\State{$Y \gets D$}\Comment{[$1$]}
\While{$X \neq Y$}\Comment{[$1+(m'+1)$]}
  \State{$X{\mathrel{++}}$}\Comment{[$m'+1$]}
\EndWhile\Comment{[$m'$] (here is an Equality Gadget)}
\end{algorithmic}
\end{algorithm}

\medskip
\noindent\textbf{$\bullet$ Parallel Gadget.}\\
The idea to control the computation length is to implement computation in parallel.
A given gadget for which there may exist complete computations of different lengths is computed in parallel with a Delaying Gadget.
When the computation is completed for the given gadget, we still must wait in its target state until the computation for the Delaying Gadget is also finished.
In this way, as long as complete computations for the Delaying Gadget are always longer than those for the given gadget (we can ensure this by choosing $D$), complete computations for the joint construction will have fixed length $T(D)+1$.
The joint computation is realized by replacing the alphabet with new letters for every combined pair of actions in both gadgets.

We construct Parallel Gadget as follows.
Let $G_1$ be a given gadget for which there exists an upper bound $L$ such that every longer computation is not proper.
We assume that all outgoing transitions from its target state go to $q_{\rm acc}$ (only the While-True Gadget violates this).
The second gadget $G_2$ will be the Delaying Gadget with a $D$ such that $T(D) \ge L$.
Parallel Gadget is illustrated in Fig.~\ref{fig:parallel_gadget} and defined as follows.
\begin{itemize}
\item The start states $s^{G_1}$ and $s^{G_2}$ are identified with the start state $s$ of the Parallel Gadget.
The other states from both gadgets are separate.
The target state $t$ of the Parallel Gadget is a fresh state.
\item The alphabet consists of the following:
\begin{itemize}
\item For every pair of letters $\alpha^{G_1}_i$ from $G_1$ and $\alpha^{G_2}_j$ from $G_2$, there is the letter $\beta_{i,j}$ that acts on the states from $G_1$ as $\alpha^{G_1}_i$ and on the states from $G_2$ as $\alpha^{G_2}_j$.
\item For every letter $\alpha^{G_2}_j$ from $G_2$, there is the letter $\gamma_j$ that acts on the states from $G_2$ as $\alpha^{G_2}_j$, fixes $t^{G_1}$ and the variable states in $G_1$, and maps all the other control flow states of $G_1$ to $q_{\rm acc}$.
\item There is the letter $\tau$ that maps both $t^{G_1}$ and $t^{G_2}$ to $\{t\}$, fixes all variable states, and maps all the other control flow states to $q_{\rm acc}$.
\item All letters map $t$ to $q_{\rm acc}$.
\end{itemize}
\end{itemize}

\begin{figure}[htb]\centering
\includegraphics[scale=1]{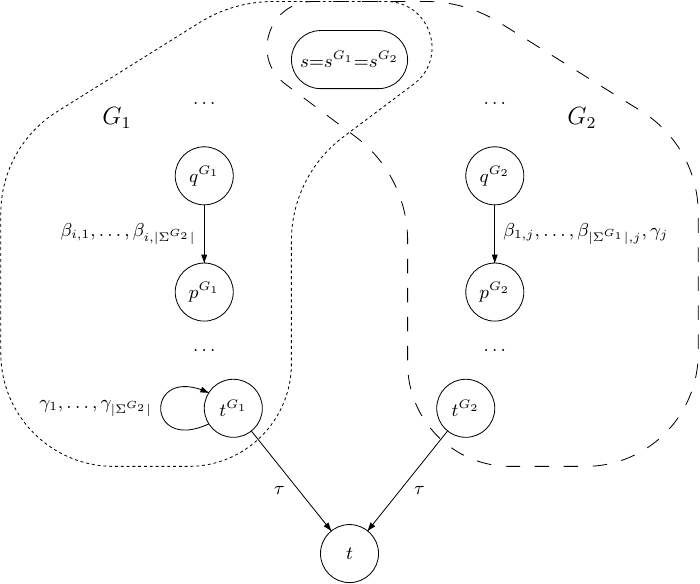}
\caption{The Parallel Gadget, where in $G_1$ a letter $\alpha^{G_1}_i$ maps a state $q^{G_1}$ to $p^{G_1}$, and in $G_2$ a letter $\alpha^{G_2}_j$ maps a state $q^{G_2}$ to $p^{G_2}$.}\label{fig:parallel_gadget}
\end{figure}

The Parallel Gadget works as follows:
First, letters $\beta_{i,j}$ must be used, which encodes proper computations for both gadgets.
When $t^{G_1}$ becomes active, the computation for the $G_1$ is completed and $\beta_{i,j}$ cannot be used anymore, but the letters $\gamma_i$ can be applied.
When letters $\gamma_i$ are applied, $G_1$ waits until the Delaying Gadget has finished the computation.
Finally, when and only when both $t^{G_1}$ and $t^{G_2}$ are active, we can and must apply $\tau$, which finishes a complete computation of the Parallel Gadget and the target state becomes active.
Therefore, the length of all complete computations of the Parallel Gadget is equal to $T(D)+1$, which is the length for the Delaying Gadget plus the letter $\tau$.

Note that we can construct the Parallel Gadget in polynomial time.
If $n$ is the number of states in $G_1$, we know that its complete computations cannot be longer than $2^n$ (so $L \le 2^n$).
Thus we can simply set $D=2^n$ and $m'=n+1$, which for sure is such that $T(D) > L$.
The number of states and letters of the Parallel Gadget is still polynomial in the size of $G_1$.

This gadget has a different nature than the previous ones because it admits that a proper computation yields a proper configuration with more than one control flow state active.

\subsection{Length divisibility}\label{subsec:divisibility}

In Subsection~\ref{subsec:NFA_lower_bound} we have constructed an NFA for which every word encoding a proper computation must be accepted or can be not accepted depending on its length, namely, it is always accepted if the length is divisible by $\lcm(1,2,\ldots,2^m)(2m+3)$.
We generalize this idea so that we will be able to express more complex properties about the length of which all words must be accepted.

We are going to test whether $|w|$ satisfies some properties, in particular, whether a function of $|w|$ is divisible by some integers.
This extends the idea from Alg.~\ref{alg:long}, which just verifies whether $|w|/r$, for some constant $r$, is not divisible by some integer from $2$ to $2^m$.

\begin{algorithm}\caption{Divisibility program.}\label{alg:divisibility}
\begin{algorithmic}[1]\small
\item[\textbf{Variables:}]{$X_1,\ldots,X_k,X'_1,\ldots,X'_k$}
\State{$\Call{Select}{X_1},\ldots,\Call{Select}{X_k}$.}\Comment{[$km$]}
\State{$X'_1 \gets 0,\ldots,X'_k \gets 0$}\Comment{$[k]$}
\While{$\mathbf{true}$}
  \Choose\Comment{[$1$]}
    \State{{\scriptsize\stackanchor{execute}{in parallel}} $\begin{cases}$
    $\Call{Delay}{}_D$ $ \\ $
    Verifying procedure
    $\end{cases}$}\Comment{[$T(D)+1$]}
  \OrChoose
    \State{$\Call{Delay}{}_D$}\Comment{[$T(D)$]}
    \State{[final state] $\Call{Wait}{}_1$}\Comment{[$1$]}
  \EndChoose
  \For{$i = 1,\ldots,k$}
    \State{$X'_i{\mathrel{++}}$}\Comment{[$m+1$]}
    \If{$X'_i = X_i$}\Comment{[$m+1$ if $X'_i=X_i$, and $m+2$ otherwise]}
      \State{$X'_i \gets 0$}\Comment{[$1$]}
    \EndIf
  \EndFor
\EndWhile
\end{algorithmic}
\end{algorithm}

We define the \emph{divisibility program} shown in Alg.~\ref{alg:divisibility}.
It is constructed for given numbers $k$ and $m$, and a \emph{verifying procedure}, which is any gadget satisfying the following properties:
\begin{itemize}
\item Every complete computation of the gadget is not longer than an $L \in 2^{\O(\mathrm{poly}(km))}$. All longer computations are not proper.
All outgoing transitions from the target state go to $q_{\rm acc}$.
These conditions will allow synchronizing the length of all complete computations to one length $T(D)+1 > L$ with a Parallel Gadget.
\item It may use, but does not modify the variables $X_1,\ldots,X_k,X'_1,\ldots,X'_k$, i.e., after its every complete computation their values in the final configuration are the same as in the initial configuration. Also, it may have internal variables, which do not have to be valid in an initial configuration, but the existence of complete computations cannot depend on their setting in an initial configuration of the gadget.
These conditions ensure that the gadget can be activated repetitively in the same proper computation of the whole program.
\item It does not contain final states.
\end{itemize}

As before, the numbers in the brackets [\, ] at the right denote the lengths of a word of a complete computation for the current line.
There is an infinite while loop, which consists of two parts.
In the first part, a nondeterministic choice is made (line~4): either to run the verifying procedure or to wait.
For the verifying procedure (line~5) we use the Parallel Gadget; this ensures that this part finishes after exactly $T(D)+1 > L$ letters.
In the waiting case (line~7), we use the Delaying Gadget with the same value of $D$ as that in the Parallel Gadget.
Then there is a single final state (line~8).
In the second part (lines~10--15), every variable $X'_i$ counts the number of iterations of the while loop modulo $X_i$.
The \emph{for} loop denotes that the body is instantiated for every $i$ (it is a Sequence Gadget).
Every complete computation of the second part (lines~10--15) has exactly $(2m+3)k$ letters.

The idea is that, for certain lengths, every proper computation must end with a configuration with the non-final control flow states in line~5 (these are precisely the two target states, of the verifying procedure and of the Delaying Gadget) or with the final state in line~8.
However, for the first option, it must succeed in the last iteration with the verifying procedure when $X'_i = \ell' \bmod X_i$.
In other words, for some selection of the values for $X_1,\ldots,X_k$, there must exist a complete computation of the verifying procedure from an initial configuration with these values for $X_i$ and $X'_i = \ell' \bmod X_i$.
Due to these auxiliary variables $X'_i$, the verifying procedure can check the divisibility of $\ell$ by $X_i$.
\begin{lemma}\label{lem:divisibility}
Consider Alg.~\ref{alg:divisibility} for some $k$, $m$, and a verifying procedure.
There exist integers $r_1 \ge 1$ and $r_2 \ge 1$ such that the NFA of Alg.~\ref{alg:divisibility} accepts all words of a length $\ell$ if and only if there exist an integer $\ell' \ge 0$ such that:
\begin{itemize}
\item $\ell = r_1 \cdot \ell' + r_2$, and
\item for every initial configuration $C$ of the verifying procedure where variables $X_1,\ldots,X_k$ are valid and $X'_i(C) = \ell' \bmod X_i(C)$ for all $1 \le i \le k$, there does not exist a complete computation for the verifying procedure.
\end{itemize}
\end{lemma}
\begin{proof}
Let $r_1=1+(T(D)+1)+(2m+3)k$, which is the length complete computations of one iteration of the while loop (lines~4--15), and let $r_2 = km+k+1+T(D)$, which is the length of a proper computation from a configuration with the initial state (start state in line~1) to a configuration with the final state in line~8 that does not contain a complete computation of the while loop.
These values depend only on the algorithm, so on $k$, $m$, and the verifying procedure.

Consider a length $\ell$ that is not expressible as $r_1 \cdot \ell' + r_2$.
There exists a proper computation $w$ of length $\ell$ that every time chooses the second branch (lines~7--8).
It is not accepted since the obtained configuration is proper and cannot contain the final state in line~8.

Consider a word $w$ of length $r_1 \cdot \ell' + r_2$.
If $w$ is not accepted, then it must encode a proper computation.
Then it must perform exactly $\ell'$ complete iterations of the while loop, and the last control flow state is either the final state in line~8 or the non-final states in line~5.
We know that at the beginning of the last (incomplete) iteration we have $X'_i = \ell' \bmod X_i$ for all $i$.
Now, if there exists a selection of the values for $X_1,\ldots,X_k$ such that there exists a complete computation of the verifying procedure for the initial configurations with these values of $X_i$ and $X'_i$, then $w$ can select these values of $X_i$ and choose the first branch in the last iteration. Thus there exists a proper computation $w$ that ends with a configuration with the non-final control states in line~5.
Otherwise, since in the last iteration entering the verifying procedure results in a non-proper configuration after $T(D)$ letters, regardless of the choice for $X_1,\ldots,X_k$, $w$ must choose the second branch in the last iteration of the while loop, which results in the final state in line~8 in the last configuration.
\end{proof}

\subsubsection{Existential divisibility formulas}

We develop a method for verifying the properties of the computation in a flexible way.
We use a subset of first-order logic, where formulas are in a special form.
For a given integer $m$, we say that a formula $\varphi$ is in \emph{existential divisibility} form if its only free variable is $\ell'$ (not necessarily occurring in $\varphi$) and it has the following form:
$$\exists_{X_1,\ldots,X_k \in \{0,\ldots,2^m-1\}}\; \psi(X_1,\ldots,X_k,\ell').$$
Formula $\psi$ is any propositional logic formula that uses operators $\land$, $\lor$, and whose simple propositions are of the following possible forms:
\begin{enumerate}
\item $(X_i = c)$, where $c \in \{0,\ldots,2^m-1\}$,
\item $(X_h = X_i + X_j)$,
\item $(X_h = X_i \cdot X_j)$,
\item $X_i$ is prime,
\item $X_i$ is the $X_j$'th prime number,
\item $(X_i \mid \ell')$ or $(X_i \nmid \ell')$,
\end{enumerate}
where $X_i,X_j,X_h$ are some variables from $\{X_1,\ldots,X_k\}$.

Given a $\varphi$, we can ask for what integer values of $\ell'$ the formula is satisfied, and in particular whether it is not a tautology over positive integers. 
\begin{problem}\label{pbm:ns-edf}
Given an existential divisibility formula $\varphi$, is there a positive integer $\ell'$ such that $\varphi(\ell')$ is not satisfied?
\end{problem}

\medskip
\noindent\textbf{$\bullet$ Verifying Gadget.}\\
Recall the formula $\psi$ occurring in an existential divisibility formula $\varphi$.
We construct the gadget $\Call{Verify}{\psi}$ for verifying $\psi$.
The gadget uses the external variables $X_1,\ldots,X_k$, which are assumed to correspond with those in $\psi$, and the external auxiliary variables $X'_1,\ldots,X'_k$.
There are also some fresh internal variables.
The value of $\ell'$ is not given, but instead, we assume that the value of every $X'_i$ is equal to $\ell' \bmod X_i$ (and $0$ when $X_i=0$), hence we will be able to check the divisibility of $\ell'$.

It is built using Sequence Gadgets for conjunctions, Choose Gadgets for disjunctions, and other appropriate gadgets for (1)--(6).
In more detail, $\Call{Verify}{\psi}$ is defined recursively as follows.
\begin{itemize}
\item For (1), when $\psi$ is $(X_i = c)$:\\
We add a fresh unique variable $C$.
First we use (in a Sequence Gadget) the Assignment Gadget $C \gets c$, and then we use the Equality Gadget $X_i = C$.
\item For (2) and (3), when $\psi$ is $(X_i + X_j = X_h)$ or $(X_i \cdot X_j = X_h)$:\\
We add a fresh unique variable $C$.
First we assign $C \gets X_i+X_j$ by the Addition Gadget or $C \gets X_i \cdot X_j$ by the Multiplication Gadget, and then we use the Equality Gadget $C = X_h$.
\item For (4), when $\psi$ is ($X_i$ is prime):\\
We use the Primality Gadget.
\item For (5), when $\psi$ is ($X_i$ is the $X_j$'th prime number):\\
We add a fresh unique variable $C$.
First we compute the $X_j$'th prime number using the Prime Number Gadget, and then we use the Equality Gadget.
\item For (6), when $\psi$ is $(X_i \mid \ell')$ or $(X_i \nmid \ell')$:
We use either the Equality Gadget $X'_i = 0$ or the Inequality Gadget $X'_i \neq 0$, which verify the divisibility under our assumption that $X'_i = \ell' \bmod X_i$.
\item When $\psi = \psi_1 \land \dots \land \psi_h$:\\
It is the Sequence Gadget joining the verifying gadgets for the subformulas,
i.e.,
\begin{algorithmic}[0]\small
\State $\Call{Verify}{\psi_1} \dots \Call{Verify}{\psi_h}$
\end{algorithmic}
\item When $\psi = \psi_1 \lor \dots \lor \psi_h$:\\
It is the Choose Gadget joining the verifying gadgets for the subformulas, i.e.,
\begin{algorithmic}[0]\small
\LineChooseThree{$\Call{Verify}{\psi_1}$}{\ldots}{$\Call{Verify}{\psi_h}$}
\end{algorithmic}
\end{itemize}
Note that for (2), (3), and~(5) if the value of the operation exceeds $2^m-1$, then for sure this simple proposition is false and a complete computation does not exist for the gadget computing this value.

Alg.~\ref{alg:example_verifying} shows the program describing $\Call{Verify}{\psi}$ for an example formula $\psi$.
\begin{algorithm}\caption{The Verifying Gadget for $k=3$ and
\[ \psi = (X_1 = X_2+X_3) \lor ((X_1\text{ is the }X_2\text{'th prime number}) \land (X_2 \nmid \ell')). \]
}\label{alg:example_verifying}
\begin{algorithmic}[1]\small
\item[\textbf{External variables:}] $X_1,\ldots,X_3,X'_1,\ldots,X'_3$
\item[\textbf{Internal variables:}] $C_1,C_2$
\Choose
\State $C_1 \gets X_2+X_3$
\State $X_1 = C_1$
\OrChoose
\State $C_2 \gets X_2\text{'th prime number}$
\State $X_1 = C_2$
\State $X'_2 \neq 0$
\EndChoose
\end{algorithmic}
\end{algorithm}

It follows that, for an initial configuration $C$ for \Call{Verify}{$\psi$} with valid variables $X_1,\ldots,X_k$ and where $X'_i(C) = \ell' \bmod X_i(C)$, there exists a complete computation if and only if $\psi(X_1,\ldots,X_k,\ell')$ is satisfied.
There exists an exponential upper bound $L$ (in the size of the gadget) on the length of all complete computations (although they may have different lengths due to disjunctions), and longer computations are always not proper.
Furthermore, the values of the variables $X_1,\ldots,X_k,X'_1,\ldots,X'_k$ remain the same in the final proper configuration.
The internal variables are always initialized, so they can be arbitrary in an initial configuration and complete computations do not depend on their setting.
Finally, there are no final states.
Hence, \Call{Verify}{$\psi$} satisfies the requirements for being the verifying procedure in Alg.~\ref{alg:divisibility}.

\subsubsection{Reduction from Problem~\ref{pbm:ns-edf}.}
We already have all ingredients to reduce Problem~\ref{pbm:ns-edf} to Problem~\ref{pbm:elu} (existential length universality).
Given an existential divisibility formula
$$\varphi(\ell') = \exists_{X_1,\ldots,X_k \in \{0,\ldots,2^m-1\}}\; \psi(X_1,\ldots,X_k),$$
we construct the program from Alg.~\ref{alg:divisibility} with the Verifying Gadget $\Call{Verify}{\psi}$ as the verifying procedure.
Hence, the formula is translated to an NFA in polynomial time.
By Lemma~\ref{lem:divisibility}, there exist integers $r_1,r_2 \ge 1$ such that the NFA accepts all words of some length $\ell$ if and only if for some integer $\ell' \ge 0$, $\ell=r_1\cdot \ell'+r_2$ and $\varphi(\ell')$ is not satisfied.
Hence, if $\varphi$ is not satisfied for some $\ell'$, then the NFA accepts all words of length $\ell$, and if the NFA accepts all words of a length $\ell$, then $\ell$ must be expressible as $r_1\cdot \ell'+r_2$ and $\varphi(\ell')$ must be not satisfied.

\begin{remark}
With a few more technical steps, which require, e.g., adding the negation and controlling variable bounds, it is possible to represent the negated Problem~\ref{pbm:ns-edf} as the satisfiability problem of the Presburger arithmetic with the prefix class $\exists\forall^*$ and whose formulas are of a specific form.
The Presburger arithmetic with the prefix class $\exists\forall^*$ is NEXPTIME-hard \cite{Gradel:1989}, and a little more general one with the prefix class $\exists^*\forall^*$ is $\Sigma_1^\mathrm{EXP}$-complete \cite{Haase:2014}.
However, our problem is a strict subclass of the first case, because the first and the only unbounded variable $\ell'$ can be checked only for divisibility, all the other variables are exponentially bounded, and the propositions are of particular forms.
Hence, we cannot directly infer the hardness from that known result.
In the last reduction step, we show that NEXPTIME-hardness still holds for our restricted problem.
\end{remark}

\subsection{Reduction to the non-satisfiability of existential divisibility formulas}

In the final step, we reduce from the canonical NEXPTIME-complete problem:
given a nondeterministic Turing machine $N$ with $s$ states, does it accept the empty input after at most $2^{s}$ steps?
Without loss of generality, $N$ is a one-tape machine using the binary alphabet $\{0,1\}$.
Furthermore, $N$ can be modified so that, upon accepting, it clears the tape, moves the head to the leftmost cell, and waits there while being still in the (unique) accepting state $q_f$.
Because we are interested in executing at most $2^{s}$ steps, we can also bound the length of the tape.
Therefore, we can assume that the tape of length $2^{s}$ is initially filled with $2^{s}$ zeroes, the head is on the leftmost cell and the machine is in the unique initial state $q_0$.
The goal is then to check if there exists a computation such that, after exactly $2^{s}$ steps, the head is on the leftmost cells and the machine is in the unique accepting state. We work with such a formulation from now on.

A computation of $N$ can be represented by a $2^{s}\times 2^{s}$ table, where the $i$'th row of the table describes the content of the tape after $i$ steps of the computation.
Every cell $(r,c)$ ($0 \le r,c \le 2^s-1$) of the table stores a symbol from $\{0,1\}$ and, possibly, a state of $N$.
Therefore, to check if there exists an accepting computation we need to check if it is possible to fill the table so that it represents subsequent steps of such a computation.

Let $Q$ be the set of states of $N$, and $q_0,q_f \in Q$ be its starting and accepting states, respectively.
We want to check if it is possible to choose an element $t(r,c)\in (Q\cup\{\mathrm{nil}\})\times\{0,1\}$ for every cell $(r,c)$ of an $2^{s}\times 2^{s}$ table, so that the whole table describes an accepting computation of $M$.
The elements are identified with integers by a function $f\colon (Q\cup\{\mathrm{nil}\})\times\{0,1\}) \rightarrow \{0,\ldots,z-1\}$, where $z = 2s+2$.
Therefore, we want to choose a number $t(r,c)$ from $\{0,\ldots,z-1\}$ for every $(r,c)$.
We encode all these choices in one non-negative integer $\ell'$ as follows.
Let $p(k)$ denote the $k$'th prime number.
For every $(r,c)$, we reserve $z$ prime numbers $p((2^{s}r+c)z+1),\ldots,p((2^{s}r+c)z+z)$.
Each of these primes represents a possible choice for $t(r,c)$.
Then, we select the remainder of $\ell'$ modulo $p((2^{s}r+c)z+i)$ to be zero if $t(r,c)=i$;
otherwise, we select any non-zero remainder.
Then, by the Chinese remainder theorem, any choice of all the elements can be represented by a non-negative integer $\ell' \leq p(1) \cdots p(2^{2s}z)$.
In the other direction, every non-negative integer $\ell'$ represents such a choice as long as, for all $(r,c)$, $\ell'$ is divisible by exactly one of the $z$ primes reserved for $(r,c)$.
Hence, we now focus on constructing an existential divisibility formula $\varphi(\ell')$ that can be used to check if indeed $\ell'$ has such a property and, if so, whether the represented choice describes an accepting computation of $N$.

We construct $\varphi(\ell')$ so that it is satisfied exactly when at least one of the following situations occur:
\begin{enumerate}
\item \label{error:twice} For some $(r,c)$ and $1\leq i < j \leq z$, $\ell'$ is divisible by both $p((2^{s}r+c)z+i)$ and $p((2^{s}r+c)z+j)$.
\item For some $(r,c)$, $\ell'$ is not divisible by $p((2^{s}r+c)z+i)$, for every $i=1,\ldots,z$.
\item $\ell'$ is not divisible by $p(f(q_0,0))$.
\item For some $c \in \{1,\ldots,2^{s}-1\}$, $\ell'$ is not divisible by $p(c\cdot z+f(\mathrm{nil},0))$.
\item $\ell'$ is not divisible by $p(2^{s}(2^{s}-1)z+f(q_f,0))$.
\item \label{error:legal} For some $r \in \{1,\ldots,2^s-1\}$ and $c \in \{2,\ldots,2^s-1\}$, the $2\times 3$ window of cells with the lower-right corner at $(r,c)$ is not legal for the transitions of $N$.
\end{enumerate}
Before describing how to construct such a formula, we elaborate on the last condition.
A $2\times 3$ window of cells is legal if it is consistent with the transition function of $N$.
We avoid giving a tedious precise definition and only specify that $W\subseteq \{0,\ldots,z-1\}^{6}$ is the set of six-tuples of numbers corresponding to elements chosen for the cells of such a legal $2\times 3$ window;
$W$ can be constructed in polynomial time given the transition function of $N$.
Therefore, the last condition can be written in more detail as follows.
\begin{enumerate}
\item[\ref{error:legal}'.] For some $r \in \{1,\ldots,2^s-1\}$ and $c \in \{2,\ldots,2^s-1\}$ and $(i_{1,1},i_{1,2},\ldots,i_{2,3})\notin W$, $\ell'$ is divisible by $p((2^{s}(r-1+x)+c-2+y)z+i_{x,y})$ for every $x=0,1$ and $y=0,1,2$.
\end{enumerate}
The table encoding an accepting computation of $N$ with a six-tuple $(r,c)$ is illustrated in Fig.~\ref{fig:tape}.

\begin{figure}[htb]\centering
\includegraphics[scale=1]{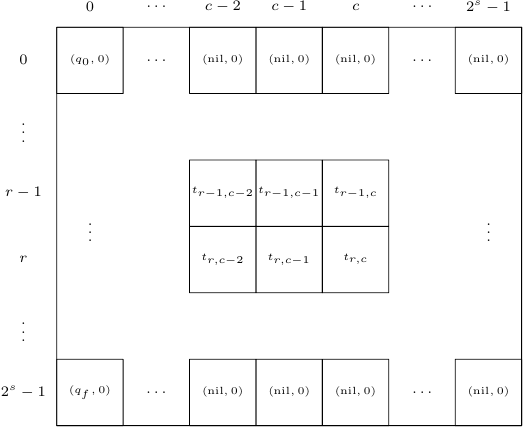}
\caption{The table of tape configurations encoded by $\ell'$ with a six-tuple at $(r,c)$.}\label{fig:tape}
\end{figure}

The final formula $\varphi(\ell')$ is the disjunction of sub-formulas $\varphi_{1}(\ell')$, $\varphi_{2}(\ell')$, $\varphi_{3}(\ell')$, $\varphi_{4}(\ell')$, $\varphi_{5}(\ell')$, $\varphi^{(i_{1,1},i_{1,2},\ldots,i_{2,3})}_{6'}(\ell')$, respectively for each of the six conditions.
Then we simply return
\[
\varphi(\ell') := \bigvee_{1\leq i<j\leq z}\varphi^{i,j}_{1}(\ell') \vee \varphi_{2}(\ell') \vee \varphi_{3}(\ell') \vee \varphi_{4}(\ell') \vee \varphi_{5}(\ell') \bigvee_{(i_{1,1},i_{1,2},\ldots,i_{2,3})\notin W} \varphi^{(i_{1,1},i_{1,2},\ldots,i_{2,3})}_{6'}(\ell')
.\]
Each of the constructed formulas is in the required form for the verifying procedure from Subsection~\ref{subsec:divisibility}, i.e., it is $\exists_{X_1,\ldots,X_k \in \{0,\ldots,2^m-1\}}\; \psi(X_1,\ldots,X_k)$, where $\psi$ uses conjunctions and disjunctions and simple propositions (1)--(6), and where the value of $m$ depends only on $s$ (but the number $k$ of existentially quantified variables might be different in different formulas).
A disjunction of all constructed formulas can be easily rewritten to also have such form.

We only describe in detail how to construct the formula $\varphi^{i,j}_{1}(\ell')$ that is satisfied when, for some $(r,c)$, $\ell'$ is divisible by both $p((2^{s}r+c)z+i)$ and $p((2^{s}r+c)z+j)$.
Formulas $\varphi_{2}(\ell')$, $\varphi_{3}(\ell')$, $\varphi_{4}(\ell')$, $\varphi_{5}(\ell')$, $\varphi^{(i_{1,1},i_{1,2},\ldots,i_{2,3})}_{6'}(\ell')$ are constructed using a similar reasoning.

To construct $\varphi^{i,j}_{1}(\ell')$ we need to bound the primes used to represent the choice.
\begin{lemma}\label{lem:primes}
$p(2^{2s}z) \leq 2^{11s}$.
\end{lemma}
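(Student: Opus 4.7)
The plan is to apply a standard explicit upper bound on the $k$-th prime, substitute $k = 2^{2s}z$, and verify that the resulting expression is dominated by $2^{11s}$. A convenient bound is Rosser's theorem, which gives $p(k) < k(\ln k + \ln\ln k)$ for all $k \geq 6$; an even cruder form $p(k) \leq 2k\ln k$ (valid for $k \geq 4$) is more than enough here. Either can be found in Bach--Shallit \cite{Bach&Shallit:1996}, which the paper already cites.

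With $k = 2^{2s}(2s+2)$, I would first estimate $\ln k = 2s\ln 2 + \ln(2s+2) \leq 2s + \ln(2s+2)$, which is clearly $O(s)$. Plugging into $p(k) \leq 2k\ln k$ yields
\[
p(2^{2s}z) \;\leq\; 2 \cdot 2^{2s}(2s+2)\bigl(2s + \ln(2s+2)\bigr) \;\leq\; 2^{2s+1}(2s+2)(4s+2),
\]
valid for $s \geq 1$. The right-hand side is of the form $2^{2s} \cdot \mathrm{poly}(s)$, so it is bounded by $2^{2s+O(\log s)}$, which is certainly at most $2^{11s}$ for all sufficiently large $s$. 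Concretely, one needs only $(2s+2)(4s+2) \leq 2^{9s-1}$, i.e. roughly $8s^2 + 12s + 4 \leq 2^{9s-1}$, which holds for every $s \geq 1$.

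The remaining care is for small values of $s$: for $s = 1$ one has $k = 16$ and $p(16) = 53 \leq 2^{11} = 2048$; for $s = 2$, $k = 96$ and $p(96) = 503 \leq 2^{22}$; and so on. These finite cases can either be checked directly, or sidestepped by noting that the reduction is only required to work for $s$ large enough that the Turing machine framework is meaningful.

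There is no real obstacle here—the lemma is a routine elementary-number-theory bound, stated only to make explicit the scale of the primes that the verifying formula $\varphi(\ell')$ will need to reason about. The slackness in the exponent (writing $11s$ rather than, say, $2s + O(\log s)$) is deliberate: it gives a clean expression in $s$ that can be used as the width $m$ of the variables in the construction of Subsection~\ref{subsubsec:divisibility}, ensuring that every prime $p((2^s r + c)z + i)$ fits into an $m$-bit variable with $m = 11s$.
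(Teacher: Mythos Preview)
Your proof is correct and follows essentially the same approach as the paper: both start from Rosser's bound $p(n) < n(\ln n + \ln\ln n)$ for $n \geq 6$ and then crudely estimate. The only cosmetic difference is that the paper first passes to the even simpler inequality $p(n) \leq 2n^{2}$ (valid for all $n\geq 1$) and then bounds $2^{2s}z \leq 2^{4s+1}$, obtaining $p(2^{2s}z) \leq 2^{8s+3} \leq 2^{11s}$ in one line, whereas you track the logarithm explicitly; the substance is identical.
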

\begin{proof}
A well-known bound \cite{Rosser&Schoenfeld:1962} states that
$p(n) < n(\log n + \log \log n)$ for $n \geq 6$.
Therefore, for all $n\geq 1$ we have $p(n) \leq 2n^{2}$.
Then, $p(2^{2s}z) = p(2^{2s}(2s+2)) \leq p(2^{2s}\cdot 2^{2s+1}) = p(2^{4s+1}) \leq 2^{8s+3} \leq 2^{11s}$.
\end{proof}

Therefore, we set $m=11s$ and quantify variables over $\{0,\ldots,2^{11s}-1\}$.
The expression $\varphi^{i,j}_{1}(\ell')$ is of the form
$\exists_{X_1,\ldots,X_k \in \{0,\ldots,2^{11s}-1\}}\; \psi^{i,j}_{1}(X_1,\ldots,X_k)$,
where $\psi^{i,j}_{1}(X_1,\ldots,X_k)$ is a conjunction of simple propositions.
For clarity, we construct it incrementally.

Recall that we are looking for $r,c \in \{0,\ldots,2^{s}-1\}$.
We start with quantifying over $r,c,r',c',\Delta \in \{0,\ldots,2^{11s}-1\}$ and including $(\Delta = 2^{10s}) \wedge (r\cdot \Delta = r') \wedge (c\cdot \Delta = c')$ in the conjunction. This guarantees that indeed $r,c\in \{0,\ldots,2^{s}-1\}$.

Then we quantify over $z_{r},z_{c},m_{r},m_{c},m_{s}\in \{0,\ldots,2^{11s}-1\}$ and include the following in the conjunction:
\[
(z_{r} = 2^{s}\cdot z) \wedge (z_{c} = z) \wedge (z_{r} \cdot r = m_{r}) \wedge (z_{c} \cdot c = m_{c}) \wedge (m_{s} = m_{r} + m_{c})
.\]
This ensures that $m_{s} = (2^{s}\cdot r + c)z$.
Additionally, we quantify over $m_{i},m_{j},m'_{i},m'_{j}\in \{0,\ldots,2^{11s}-1\}$ and add:
\[
(m_{i} = i) \wedge (m_{j} = j) \wedge (m_{s} + m_{i} = m'_{i}) \wedge (m_{s} + m_{j} = m'_{j})
\]
to the conjunction, which ensures that $m'_i = (2^{s}\cdot r + c)z+i$ and $m'_j = (2^{s}\cdot r + c)z+j$.
Finally, we quantify over $p_{i},p_{j}\in \{0,\ldots,2^{11s}-1\}$ and include:
\[
(p_{i} \text{ is the $m'_{i}$'th prime number}) \wedge (p_{j} \text{ is the $m'_{j}$'th prime number}) \wedge (p_{i} \mid \ell') \wedge (p_{j} \mid \ell')
\]
in the conjunction.
By construction, the obtained $\varphi^{i,j}_{1}(\ell')$ is satisfied when, for some $(r,c)$, $\ell'$ is divisible by both $p((2^{s}r+c)z+i)$ and $p((2^{s}r+c)z+j)$.

The other formulas are constructed using the same principle, and then we rewrite their disjunction to have the required form and obtain $\varphi(\ell')$.
Formula $\varphi(\ell')$ is satisfied exactly when $\ell'$ does not correspond to an accepting computation of $N$.
Therefore, we have reduced (by a polynomial reduction) the canonical NEXPTIME-complete to Problem~\ref{pbm:ns-edf}, which was reduced to Problem~\ref{pbm:elu}.

Since a standard binarization works for NFAs, we can further reduce to the binary case.
We state our final result:
\begin{theorem}
Existential length universality (Problem~\ref{pbm:elu}) for NFAs is NEXPTIME-hard, even if the alphabet is binary.
\end{theorem}

\section{The case where $M$ is a regular expression}\label{sec:re}

\begin{theorem}\label{thm:re_given_length}
Specified-length universality (Problem~\ref{pbm:glu}) for regular expressions and NFAs is PSPACE-complete, and existential length universality (Problem~\ref{pbm:elu}) for regular expressions is PSPACE-hard.
\end{theorem}
\begin{proof}
We transform a regular expression $M$ into an NFA $N$ with linear number of states in the length of $M$.
To see that the problem is in nondeterministic linear space, note that to verify that
$\Sigma^\ell \subsetneq L(M)$, all we need do is guess a string of length $\ell$ that is \emph{not} accepted and then simulate $N$ on this string.
Of course, we do not store the actual string, we just guess it symbol-by-symbol, meanwhile counting up to $\ell$ to make sure the length is correct.
The counter can be achieved in $\O(\log \ell)$ space. By Savitch's theorem, it follows that the problem is in PSPACE.

To see that the problem is PSPACE-hard, we model our proof after the proof that non-universality for NFA's in PSPACE-hard (see, e.g., \cite[Section~10.6]{Aho&Hopcroft&Ullman:1974}).
That proof takes a polynomial-space-bounded deterministic TM $T$ and input $x$, and creates a polynomial length regular expression $M$ that describes all strings which do not correspond to accepting computations.
So $L(M) \neq \Sigma^*$ if and only if $T$ accepts $x$.
Strings can fail to correspond to accepting computations because they begin wrong, end wrong, have a syntax error, have an intermediate step that exceeds the polynomial-space bound, or have a transition that does not correspond to a rule of the TM.
Our regular expression is a sum of these cases.

We now modify this construction.
First, we assume our TM always has the next move, except out of the halting state $h$, where there is no valid next move.
This is easy to ensure by introducing a few special states that yield a cycle and do not bring the head beyond the polynomial space bound; the TM goes into these states when the original TM does not have a transition.
Thus on every input, we either halt or loop forever (within the polynomial space bound).
Next, we create a regular expression $M$ describing the language of invalid computations; its language contains all strings that begin wrong, have syntax errors, or fail to follow the rules of the TM, but no strings that are a prefix of a possibly-accepting computation.
This shows that existential length universality is PSPACE-hard.

For the hardness of specified-length universality, we set $\ell$ to be the length of words describing computations of a length longer than the one obtained from the polynomial space bound for $T$, that is, $2^{|x|^c}+1$.
So if $T$ fails to accept an input $x$, for every length there is some prefix of a computation that is correct, which is not in $L(M)$.
On the other hand, if $T$ accepts $x$, then every string of length at least $\ell$ is either syntactically incorrect or has a bogus transition -- including a transition out of the halting state -- so every string of that length is in the language.
\end{proof}

The problem about the exact complexity of existential length universality in the case of a regular expression remains open, as well as the bounds for the minimal universality length.

\begin{openquestion}
What is the complexity of existential length universality (Problem~\ref{pbm:elu}) when $M$ is a regular expression?
What is the largest possible minimal universality length in terms of the length of $M$?
\end{openquestion}

If we could show that the minimal universality length is at most exponential, then it would immediately imply that the problem is PSPACE-complete.
The best example that we have found has exponential minimal universality length.

\begin{theorem}
There exists a binary regular expression with $n$ input symbols for which the minimal universality length is $2^{\Omega(n)}$.
\end{theorem}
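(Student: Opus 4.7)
The plan is to imitate the Chinese-remainder-theorem (CRT) construction used to prove the DFA lower bound of Theorem~\ref{thm:DFA_length_lower_bound}, but to express the resulting language directly as a binary regular expression. The idea is to branch on the length-$h$ prefix of the input and impose a separate modular constraint on the length of the suffix within each branch. Since the branches together cover every binary string of length at least $h$, universality at length $\ell$ becomes equivalent to a system of $2^{h}$ pairwise coprime congruences on $\ell$, whose smallest solution is, by CRT, on the order of the product of the moduli.

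Explicitly, let $h$ be a parameter, let $\{p_w : w \in \{0,1\}^h\}$ be pairwise distinct primes, and choose residues $r_w \in \{0,\ldots,p_w-1\}$ via CRT. Set
\[
R \;=\; \sum_{w\in\{0,1\}^{h}}\, w\cdot\Sigma^{r_w}\cdot(\Sigma^{p_w})^{*},
\]
writing $\Sigma$ as shorthand for $(0+1)$. A length-$\ell$ binary word (with $\ell\ge h$) belongs to $L(R)$ iff its length-$h$ prefix $w$ is followed by a word of length $\equiv r_w\pmod{p_w}$; hence $\Sigma^{\ell}\subseteq L(R)$ iff $\ell\equiv h+r_w\pmod{p_w}$ for every $w\in\{0,1\}^{h}$. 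The residues $r_w$ can then be chosen so that the smallest such $\ell$ is at least $\prod_w p_w - O(h)$, which becomes the minimal universality length of $R$.

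The size analysis is where the main work lies. Each summand contributes $h+2r_w+2p_w$ input symbols, giving a total of $n\le 2^{h}\bigl(h + O(\max_w p_w)\bigr)$ symbols. To reach the target bound one must choose $h$ together with the primes so that $\log_2\prod_w p_w$ is linear in $n$. Taking simply the first $2^{h}$ primes yields only $2^{\Theta(\sqrt{n\log n})}$, mirroring the DFA bound, so the improvement to $2^{\Omega(n)}$ requires a more delicate choice: either a smaller family of larger, pairwise coprime moduli, or sharing the tail factors $(\Sigma^{p_w})^{*}$ between several prefix branches so that the dominant $\sum_w 2p_w$ term is substantially reduced relative to the contribution to $\prod_w p_w$. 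I expect this size/product optimisation to be the heart of the argument, while the correctness of the construction follows straightforwardly from CRT as sketched above.
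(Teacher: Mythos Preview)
Your approach has a genuine and, as far as one can see, unfixable gap: the CRT construction is intrinsically limited to a subexponential universality length and cannot reach $2^{\Omega(n)}$.  In your expression each modulus $p$ contributes a factor $(\Sigma^{p})^{*}$, which costs $2p$ input symbols; even with maximal sharing of tails, if $p_{1},\ldots,p_{k}$ are the distinct moduli actually used, the regular expression has size $n\ge 2\sum_{i}p_{i}$.  The resulting universality length is at most $\mathrm{lcm}(p_{1},\ldots,p_{k})$, and the maximum of $\mathrm{lcm}$ over positive integers with a prescribed sum $S$ is Landau's function $g(S)=e^{\sqrt{S\ln S}\,(1+o(1))}$.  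Hence your construction can never exceed $e^{\Theta(\sqrt{n\log n})}$, which is exactly the DFA bound you already recognise and strictly $2^{o(n)}$.  Neither ``larger coprime moduli'' nor ``sharing tail factors'' escapes this, because the obstacle is the sum-versus-lcm tradeoff, not the number of branches.

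The paper's proof abandons CRT entirely.  It uses the construction of Ellul, Krawetz, Shallit and Wang in which a regular expression of size $O(m)$ specifies all strings over a paired alphabet \emph{except} the single string listing the binary expansions of $0,1,\ldots,2^{m}-1$ between delimiters; the point is that ``two consecutive blocks differ by exactly one'' is a local condition, so its violation can be described by an $O(m)$-size expression.  One then drops the ``ends correctly'' clause so that every prefix of the long string is also unspecified, and adds a second summand accepting all strings that end in the final block.  The only length at which both summands together cover $\Sigma^{\ell}$ is the length of the full counting string, $(m+1)(2^{m}-1)$, while the expression has size linear in $m$; binarising gives $2^{\Omega(n)}$.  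The key idea you are missing is this single exponentially long ``forced'' string describable by a linear-size expression, rather than many short periodic constraints.
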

\begin{proof}
We already know that there is a class of regular expressions such that the shortest not specified string looks like the binary expansions of $0, 1, ..., 2^m - 1$ separated by delimiters.
This can be found, for example, in~\cite[Section~5]{Ellul:2005}.
More precisely, the shortest not specified string looks like (for $m=3$):
\begin{verbatim}
#000#100#010#110#001#101#011#
 100 010 110 001 101 011 111
\end{verbatim}
\noindent where columns denote composite symbols and the bottom row is one more than the
top row and numbers are written least-significant-bit-first.

Now we can modify this construction so not only does it not accept this
string, but it also fails to accept every prefix of this string.
This is easy, as it just involves deleting the conditions about ending
properly (so condition~(2) from~\cite[Section~5]{Ellul:2005}).

So our regular expression is a sum of two parts.
One specifies all strings except those that are prefixes of the string above;
the other one specifies all strings that end in the proper string, which in for $m=3$ would be
\begin{verbatim}
#011#
 111
\end{verbatim}

Since no prefix ends with this except the total string itself, this regular expression will fail to specify all strings of every length until it gets to the length of the total string ($(m+1)(2^m-1)$).
As in \cite{Ellul:2005}, the regular expression can be converted to a binary one with $n$ input symbols, where $n$ is linear in $m$.
Thus, its minimal universality length is $2^{\Omega(n)}$.
\end{proof}

\section{The case where $M$ is a PDA}\label{sec:pda}

\begin{theorem}\label{thm:PDA_unsolvable}
Existential length universality (Problem~\ref{pbm:elu}) is recursively unsolvable for PDAs.
\end{theorem}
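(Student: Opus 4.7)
The plan is to reduce the halting problem for Turing machines to existential length universality for PDAs: given a TM $T$, I will construct in polynomial time a PDA $M_T$ such that $T$ halts on empty input if and only if there exists $\ell$ with $\Sigma^\ell \subseteq L(M_T)$. Since halting is recursively unsolvable, so is existential length universality for PDAs.

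The core idea will be to associate to each length $\ell$ (above a fixed constant $N$) at most one \emph{canonical witness} $w_\ell$ of length exactly $\ell$, whose very existence encodes that $T$ has not yet halted by step $k(\ell) := \lfloor \sqrt{\ell} \rfloor$. Concretely, $w_\ell$ writes down the first $k(\ell)$ configurations of $T$'s computation on empty input in the standard back-to-back form (every other configuration reversed, so that consecutive configurations can be matched on a single PDA stack), separated by $\#$'s and padded with blanks to length $\ell$; $w_\ell$ is declared undefined as soon as any listed configuration is already in a halting state. I then set $L(M_T) = \Sigma^* \setminus (\{w_\ell : w_\ell \text{ defined}\} \cup \{w : |w| < N\})$. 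If $T$ halts at step $k_0$, then $w_\ell$ is undefined for every $\ell$ with $k(\ell) \geq k_0$, so $\Sigma^\ell \subseteq L(M_T)$ and the minimum such $\ell$ is $\Theta(k_0^2)$; if $T$ does not halt, $w_\ell$ is defined for every $\ell \geq N$ and the small-word correction blocks the remaining lengths, so no length is universal.

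To realize $L(M_T)$ by a PDA I will invoke the classical guess-and-verify trick for invalid computations. A word fails to be a canonical witness exactly when one of a short list of local violations occurs: wrong syntactic form, incorrect initial configuration, an adjacent pair $(C_i, C_{i+1})$ that is not related by a transition of $T$, or a halting-state symbol appearing in some listed configuration. The first, second, and fourth conditions are regular; the third is the standard CFL of invalid TM transitions, recognized by a PDA that nondeterministically guesses the offending position and verifies it by matching the two adjacent (reversed) configurations on its stack. Their union, intersected with the regular language $\{w : |w| \geq N\}$, will give $L(M_T)$ as a CFL of size polynomial in $|T|$. The main obstacle will be to calibrate the encoding and the function $k(\ell)$ so that each length above $N$ has a unique candidate witness and every failure mode is locally checkable on a single stack; the reverse-every-other-configuration trick is essential here, and the small-word blocker is needed to prevent spuriously universal short lengths where the encoding does not yet fit. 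As a byproduct, since the halting time $k_0$ of an $n$-state TM can exceed any computable function of $n$ (Busy Beaver), the same construction also establishes that the minimal universality length of an $n$-state PDA cannot be bounded by any computable function of $n$.
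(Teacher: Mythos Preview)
Your approach shares the core idea with the paper---reduce from halting on empty input via the classical ``invalid computations form a CFL'' trick---but the specific encoding you propose has a gap that breaks the reduction.

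You declare the canonical witness $w_\ell$ to contain exactly $k(\ell)=\lfloor\sqrt{\ell}\rfloor$ configurations, padded with blanks to length~$\ell$, and you list four failure modes the PDA should detect: wrong syntactic form, wrong initial configuration, a bad adjacent pair, or a halting-state symbol. Missing from this list is ``wrong number of configurations for this length'', and you cannot simply add it: checking that the number of $\#$-delimited blocks equals $\lfloor\sqrt{|w|}\rfloor$ is not a context-free property. Without such a check, your PDA also rejects the string consisting of the single initial configuration $C_0$ followed by $\ell-|C_0|$ blanks. That string has the right syntactic shape, the right initial configuration, no adjacent-pair violation (there is only one configuration), and no halting symbol---so it passes all four tests and is rejected by $M_T$. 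Such a string exists for \emph{every} length $\ell\geq|C_0|$, regardless of whether $T$ halts. Hence even when $T$ halts no length is universal, and the reduction collapses.

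The paper sidesteps this entirely by dropping both the padding and the function $k(\ell)$: its PDA rejects precisely the \emph{prefixes} of the (unique, deterministic) computation of $T$ on empty input, allowing the last configuration to be incomplete. Prefixes automatically supply exactly one rejected word of each length while the computation is still running, and none once it has halted and one more configuration has been attempted (there is no transition out of the halting state). Your construction can be repaired in the same spirit---for instance by redefining $k(\ell)$ as ``the maximal number of complete configurations that fit in $\ell$ symbols'' and adding the PDA-checkable violation ``the padding block is at least as long as the next configuration would be''---but as written the argument does not go through.
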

\begin{proof}
We modify the usual proof \cite[Thm.~8.11, p.~203]{Hopcroft&Ullman:1979}
that ``Given a PDA $M$, is $L(M)  = \Sigma^*$?'' is an unsolvable problem.

In that proof, we start with a Turing machine $T$, assumed to have a single halt state $h$ and no transitions out of this halting state.
We consider the language $L$ of all valid accepting computations of $T$.
A valid computation consists of a sequence of configurations of the machine, separated by a delimiter $\#$.
Every second configuration is reversed, i.e., the configurations are written unreversed and reversed alternatingly.
Each unreversed configuration is of the form $x q y$, where $xy$ is the TM's tape contents, $q$ is the TM's current state, and the TM is scanning the first symbol of $y$.  
A valid computation must start with $\# q_0 x \#$ for some string $x$, and end with $\# y h z \#$ for some strings $y$ and $z$, where $h$ is the halt state.
Furthermore, two consecutive configurations inside a valid computation must follow by the rules of the TM.  

Thus we can accept $\overline{L}$ with a PDA by checking (nondeterministically) if a given string begins wrong, ends wrong, is syntactically invalid,
or has two consecutive configurations that do not follow by the rules of $T$.
Only this last requirement presents any challenge.
The idea is to push the first configuration on, then pop it off and compare to the next (this is why every other configuration
needs to be reversed).
Comparing two configurations just requires matching symbols, except in the region of the state, where we must verify that the second configuration follows by the rules of $T$ from the first.

Thus, a PDA can be constructed to accept $\overline{L}$, the set of all strings that {\it do not} represent valid accepting computations of $T$.
Hence ``Given $T$, is $L(T) = \Sigma^*$?'' is unsolvable, because if we could answer it, we would know whether $T$ accepts some string.
This concludes our sketch of the usual universality proof.

Now, to prove our result about existential length universality, we modify the above construction in two ways.
First, we assume that our TM $T$ has the property that there is always a next move possible, except from the halting state.
Thus, the only possibilities on any input are (1) arriving in the halting state $h$, after which there is no move or (2) running forever without halting.

Second, we make a PDA based on a TM $T$ such that our PDA fails to accept all strings that represent valid halting computations of $T$ that start with empty input, and \emph{also} fails to accept all strings that are prefixes of a valid computation.
In other words, our PDA is designed to accept if a computation starts wrong, is syntactically invalid, or if a configuration does not follow from the previous one.
If the last configuration is incomplete, and what is actually present does not violate any rules of $T$, however, our PDA \emph{does not} accept.  

Now, suppose that $T$ does not accept the empty string.
Then $T$ does not halt on empty input, so there are valid computations for every number of steps.
So there are strings representing valid computations, or prefixes of valid computations, of every length.
Since $M$ fails to accept these, there is no $\ell$ such that $M$ accepts all strings of length $\ell$.

On the other hand, if $T$ does accept the empty string, then there is exactly one valid halting computation for it; say it is of length $s$, where the last configuration is of length $t$.
Consider every putative computation of length $\geq s+t+2$; it must violate
a rule, since there are no computations out of the halting state.
So $M$ accepts all strings of length $s+t+2$.
Thus we could decide if $T$ accepts the empty string, which means that existential length universality for PDAs is unsolvable.
\end{proof}

\begin{corollary}
Fix an alphabet $\Sigma$.
For a PDA $M$, let $\ell(M)$ be the minimal universality length (if it exists) of $L(M)$.
Let $f(n)$ be the maximum of $\ell(M)$ over all PDA's $M$ of size $n$.
Then $f(n)$ grows faster than every computable function.
\end{corollary}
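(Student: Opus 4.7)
The plan is to argue by contradiction in the standard way: if $f$ were bounded above by a computable function, then existential length universality for PDAs would be decidable, contradicting Theorem~\ref{thm:PDA_unsolvable}.

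Concretely, suppose for contradiction that there exists a total computable function $g : \mathbb{N} \to \mathbb{N}$ with $f(n) \leq g(n)$ for all $n$. I would then describe the following decision procedure for Problem~\ref{pbm:elu} on PDAs. Given a PDA $M$ with $n$ states (or more generally of size $n$ under the fixed measure used in Theorem~\ref{thm:PDA_unsolvable}), first compute $g(n)$. Then, for each $\ell \in \{0, 1, \ldots, g(n)\}$, enumerate the finitely many strings in $\Sigma^\ell$ and test each one for membership in $L(M)$; PDA membership is decidable, so this inner test is effective. Accept if some $\ell$ in this range satisfies $\Sigma^\ell \subseteq L(M)$, and reject otherwise.

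Correctness follows immediately from the hypothesis: if a universal length exists at all, then by definition of $\ell(M)$ there is one of size at most $f(n) \leq g(n)$, which would be found by the enumeration; conversely, if the enumeration accepts then a universal length manifestly exists. Thus the procedure decides Problem~\ref{pbm:elu} for PDAs, contradicting Theorem~\ref{thm:PDA_unsolvable}. Hence no such computable upper bound $g$ can exist, which is precisely the statement that $f$ grows faster than every computable function.

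There is no real technical obstacle here; the only mild subtlety is making sure the ``verification'' step is effective for PDAs, which it is, since $\Sigma^\ell$ is a finite set of finite strings and PDA membership is decidable in polynomial time. The argument is generic and uses only the undecidability supplied by the preceding theorem together with the decidability of PDA membership.
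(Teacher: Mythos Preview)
Your proof is correct and follows essentially the same approach as the paper: assume a computable bound $g$, then decide existential length universality by computing $g(n)$ and brute-force checking $\Sigma^\ell \subseteq L(M)$ for all $\ell \le g(n)$ via PDA membership, contradicting Theorem~\ref{thm:PDA_unsolvable}. The paper makes the membership test concrete by converting to a CFG and invoking CYK, but this is exactly the ``PDA membership is decidable'' step you already flagged.
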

\begin{proof}
If $f(n)$ grew slower than some computable function $g(n)$, then we could
solve existential universality for a given PDA $M$ of size $n$ by
(1) computing $g(n)$;
(2) testing, by a brute-force enumeration, whether $M$ accepts all strings of length $\ell$ for all $\ell \le g(n)$.
(We can deterministically test if a PDA accepts a string by converting the
PDA to an equivalent context-free grammar and then using the usual
Cocke-Younger-Kasami dynamic programming algorithm for CFL membership.)
But existential length universality is undecidable for PDA's.
\end{proof}

In contrast, the specified-length universality problem is solvable in the case of a PDA.
\begin{proposition}
Specified-length universality for PDAs is in co-NEXPTIME.
\end{proposition}
\begin{proof}
We can convert a PDA in polynomial time to a context-free grammar generating the same language.
Then we can guess an (exponentially long in the size of the input) word of length $\ell$ such that it is not accepted, and verify if it is indeed not generated by the context-free grammar (using, for example, the Cocke-Younger-Kasami dynamic programming solution).
\end{proof}

Finally, we prove the co-NEXPTIME-hardness.

A tiling system is a $\mathcal{T} = (T,H,V,t_I,t_F)$ where $T$ is a finite set.
Its elements are called \emph{tile types} or just \emph{tiles}.
$H,V \subseteq T^2$ are called the \emph{horizontal}, resp., \emph{vertical} matching relation; $t_I,t_F \in T$ are two designated \emph{initial} and
\emph{final} tiles.
Given an $n \in \mathbb{N}$, a \emph{$\mathcal{T}$-tiling of the $n$-th exponential square} is a function $\tau\colon \{0,\ldots,2^n-1\} \times \{0,\ldots,2^n-1\} \to T$.
It is called \emph{valid} if 
\begin{itemize}
\item $\tau(0,0) = t_I$ and $\tau(2^n-1,2^n-1) = t_F$,
\item for all $i=0,\ldots,2^n-2$ and all $j=0,\ldots,2^n-1$: $(\tau(i,j),\tau(i+1,j)) \in H$, and
\item for all $i=0,\ldots,2^n-1$ and all $j=0,\ldots,2^n-2$: $(\tau(i,j),\tau(i,j+1)) \in V$.
\end{itemize}

\begin{proposition}[\cite{Boas:1997}]
The following problem is NEXPTIME-hard:
Given a tiling system $\mathcal{T}$ and a number $n$ encoded unarily, is there a valid tiling of the $n$-th exponential square?
\end{proposition}

For a given $T$ we define two sets of symbols via $\overrightarrow{T} := \{ \overrightarrow{t} \mid t \in T \}$ and
$\overleftarrow{T} := \{ \overleftarrow{t} \mid t \in T \}$. Then let 
$\Sigma_T := \overrightarrow{T} \cup \overleftarrow{T} \cup \{0,1,\overrightarrow{\#},\overleftarrow{\#}\}$.
We use $\bar{T}$ to abbreviate $\overrightarrow{T} \cup \overleftarrow{T}$, $\bar{\#}$ to abbreviate
$\{\overrightarrow{\#},\overleftarrow{\#}\}$, and $\mathsf{B}$ to abbreviate $\{0,1\}$.

Given a $\mathcal{T}$-tiling $\tau$ of the exponential square for parameter $n$, we define the
\emph{encoding} of the $j$-th row as the following word $w_j^\tau$ over the alphabet $\Sigma_T$.
\begin{displaymath}
w_j^\tau := s_j\, \overrightarrow{j}\, \overrightarrow{\tau(0,j)}\, \overrightarrow{0} \, \overleftarrow{\tau(1,j)} \, \overleftarrow{1} \, \overrightarrow{\tau(2,j)} \overrightarrow{2} \ldots 
  \overleftarrow{\tau(2^n{-}1,j)}\,\overleftarrow{2^n{-}1} 
\end{displaymath}
where $s_j = \overrightarrow{\#}$ if $j$ is even and $s_j = \overleftarrow{\#}$ if $j$ is odd. Moreover, $\overrightarrow{i}$ denotes
the $n$-bit binary coding of the number $i$ with least significant bit on the 
left, and $\overleftarrow{i}$ does the same but with least significant bit on the right. Likewise, for a string 
$u \in \{0,1\}^+$ we write $\mathit{val}^{\rightarrow}(u)$, resp., $\mathit{val}^{\leftarrow}(u)$ to denote the number
that is encoded by $u$ with least significant bit on the left, resp., on the right.

The \emph{encoding} of the entire tiling $\tau$ is the word
\begin{displaymath}
w^\tau := w_0^\tau\, w_1^\tau\, w_2^\tau \, \ldots \, w_{2^n-1}^\tau
\end{displaymath}
Intuitively, the encoding of such a tiling lists the tiles placed in the square row-wise.
Each row is preceded by the row number, and each tile is followed by its column number in binary coding.
The order for the bits of even row and column numbers is that of increasing significance; for odd ones they decrease in significance from left to right.

\begin{lemma} 
\label{lem:length}
Let $\tau$ be a tiling of the $n$-th exponential square. Then $|w^\tau| = (n+1)\cdot (2^{2n} + 2^n)$.
\end{lemma}

\begin{lemma}
\label{lem:construct}
Given a tiling system $\mathcal{T}$ and a number $n$ encoded unarily. There is a PDA that recognizes all words 
which are not encodings of a valid $\mathcal{T}$-tiling of the $n$-th square.
\end{lemma}

\begin{proof}
A word $w$ over the alphabet $\Sigma_T$ is not the encoding of a valid $\mathcal{T}$-tiling only if 
one of the following conditions holds.
\begin{enumerate}
\item $w$ is not of the form 
\begin{displaymath}
\overrightarrow{\#}0^n (\overrightarrow{T}0^n(\bar{T}\mathsf{B}^n)^*)\overleftarrow{T}1^n)
(\bar{\#}\mathsf{B}^n (\overrightarrow{T}0^n(\bar{T}\mathsf{B}^n)^*)\overleftarrow{T}1^n))^*
\overleftarrow{\#}1^n (\overrightarrow{T}0^n(\bar{T}\mathsf{B}^n)^*)\overleftarrow{T}1^n)
\end{displaymath}
\item There is a subword $\overrightarrow{\#}u \ldots \overleftarrow{\#}v$ with $|u|=|v|=n$ and the part in between does not 
      contain any $\bar{\#}$-symbol but $\mathit{val}^{\leftarrow}(v) \ne \mathit{val}^\rightarrow(u)+1$. Likewise for
      subwords of the form $\overleftarrow{\#}u \ldots \overrightarrow{\#}v$.
\item There is a subword of the form $u \overleftarrow{T}v$ with $|u|=|v|=n$ but 
      $\mathit{val}^{\leftarrow}(v) \ne \mathit{val}^\rightarrow(u)+1$. Likewise for subwords of the form $u\overrightarrow{T}v$.
\item The symbol at position $n+1$ (starting with 0) of $w$ is not $\overrightarrow{t_I}$.
\item There is no occurrence of a subword of the form $\overleftarrow{\#}1^n \ldots \overleftarrow{t_F} 1^n$ with no further
      $\bar{\#}$-symbols in between.
\item There is a subword of the form $\overrightarrow{t}u\overleftarrow{t'}$ with $u \in \mathsf{B}^n$ and $(t,t') \not\in H$.
      Likewise for subwords of the form $\overleftarrow{t}u\overrightarrow{t'}$.
\item There is a subword of the form $\overrightarrow{t}u\ldots \overrightarrow{t'}u$ with $u \in \mathsf{B}^n$, a single occurrence
      of a $\bar{\#}$-symbol in between, and $(t,t') \not\in V$. Likewise for $\overleftarrow{t}u\ldots \overleftarrow{t'}u$.
\end{enumerate}
Conditions 1--3 check whether the word is an encoding of a tiling according to the definition above. Conditions 4--7 check whether
it is valid.

Conditions 1, 4, 5 and 6 are easily seen to be regular. It is not difficult to construct a PDA making use of nondeterminism for each
of the conditions 2, 3 and 7. For example, condition 2 can be checked by guessing a position containing $\overrightarrow{\#}$, then
pushing the following $u \in \mathsf{B}^n$ onto the stack, then moving to the next occurrence of $\overleftarrow{\#}$ and then
comparing the following $v$ with the stack content in the usual way for binary arithmetic: a prefix of $v$ must equal the top of
the stack until a $1$ in the input is read and the corresponding symbol on the stack is a $0$. The rest of $v$ must contain $0$'s
only while the rest of the stack contains $1$'s only.

Since PDAs are closed under union and NFAs are PDAs, the automata hinted at can be used to form a single PDA 
$\mathcal{A}_{\mathcal{T},n}$ that accepts a word if and only if this word is not a valid $\mathcal{T}$-tiling of the $n$-th exponential square.
\end{proof}

\begin{theorem}
Specified-length universality for PDA is co-NEXPTIME-hard, even if the alphabet is binary.
\end{theorem}
\begin{proof}
By reduction from the exponential-square tiling problem. Given $\mathcal{T},n$ with $n$ encoded unarily, it is not hard to see that 
the PDA $\mathcal{A}_{\mathcal{T},n}$ of the previous lemma can be constructed in polynomial time in $|\mathcal{T}|+n$. Furthermore,
let $n' := (n+1)\cdot (2^{2n} + 2^n)$. Note that the binary encoding of $n'$ can be constructed from $n$ in time polynomial in the
value of $n$ which is sufficient here since $n$ is encoded unarily.

It remains to see that this reduction is correct, i.e., that there is a valid $\mathcal{T}$-tiling of the $n$-th exponential square
if and only if $L(\mathcal{A}_{\mathcal{T},n})$ does not contain all $\Sigma_T$-words of length $n'$.  

$\Rightarrow$: Suppose there is a valid $\mathcal{T}$-tiling $\tau$. According to Lemma~\ref{lem:length}, $|w^\tau| = n'$. 
According to Lemma~\ref{lem:construct} we have $w^\tau \not\in L(\mathcal{A}_{\mathcal{T},n})$.

$\Leftarrow$: Suppose that there is some $w \in (\Sigma_T)^{n'} \setminus L(\mathcal{A}_{\mathcal{T},n})$. By the construction
of $\mathcal{A}_{\mathcal{T},n}$, $w$ cannot meet any of the conditions set out in the proof of Lemma~\ref{lem:construct}. In other
words, it meets all of their positive counterparts. Because of the conditions (1)--(3), $w$ must, in fact, encode a $\mathcal{T}$-tiling 
of the $n$-th exponential square. Because of conditions (4)--(7), this tiling must be valid. 

Finally, a standard binarization applies to PDAs.
The length $|w^\tau|$ stated in Lemma~\ref{lem:length} can easily be adjusted accordingly.
\end{proof}

\bibliographystyle{plain}
\bibliography{lu}
\end{document}